\newtheorem{definition}{Definition}
\newtheorem{theorem}{Theorem}
\begin{document}

\title{{\bf Matchgates Revisited}}

\vspace{0.3in}
\author{Jin-Yi Cai\thanks{University of Wisconsin-Madison and Peking University.
 {\tt jyc@cs.wisc.edu}. Supported by NSF CCF-0914969.}
\and Aaron Gorenstein\thanks{University of Wisconsin-Madison.
 {\tt agorenst@cs.wisc.edu}}}


\newcommand{\Pf}{\mathrm{Pf}}
\newcommand{\PfS}{\mathrm{PfS}}
\newcommand{\gp}{Grassmann-Pl\"ucker}
\newcommand{\PerfMatch}{\mathrm{PerfMatch}}
\newcommand{\orientg}{\overrightarrow{G}}
\newcommand{\order}[1]{\left| #1 \right|}

\newcommand{\tensor}{\otimes}

\newcommand{\orientgalpha}[1][\alpha]{\overrightarrow{G^{{#1}}}}

\newcommand{\alphastring}{u b v c w}
\newcommand{\talphastring}{\tilde{u} b \tilde{v} c \tilde{w}}
\newcommand{\bartalphastring}{\tilde{u} \overline{b} \tilde{v} \overline{c} \tilde{w}}
\newcommand{\baralphastring}{u\overline{b}v\overline{c}w}

\date{}
\maketitle

\begin{abstract}
We study a collection of concepts and theorems that laid the foundation
of matchgate computation. This includes the signature theory of
planar matchgates, and the parallel theory of characters of
not necessarily planar matchgates. 
Our aim is to present  a unified and, whenever possible, simplified
account of this challenging theory. Our results include: (1) A direct
proof that Matchgate Identities (MGI) are necessary and sufficient conditions
for matchgate signatures. This proof is self-contained
and does not go through 
the character theory. More importantly it rectifies a gap in the existing 
proof. (2) A proof that Matchgate Identities
 already imply the Parity Condition.
(3) A simplified construction of a crossover gadget. 
This is used in the proof of sufficiency of MGI for
matchgate signatures. This is also used to
give a proof of equivalence between the signature theory and
the character theory which permits omittable nodes.
(4) A direct construction of matchgates realizing
all matchgate-realizable symmetric signatures.
\end{abstract}

\section{Introduction}
Leslie Valiant introduced {\it matchgates} in a seminal paper~\cite{valiant02a}.
In that paper he presented a way to encode computation via 
Pfaffian and Pfaffian Sum, and showed that a non-trivial, though
restricted, fragment of
quantum computation can be simulated in classical polynomial time.
Underlying this magic is a way to encode 
certain quantum states by a classical computation of perfect matchings,
and to simulate certain quantum gates by the so-called matchgates.
These matchgates are weighted graphs, not necessarily planar,
and are equipped with input and output nodes, as well as the so-called
omittable nodes.  Each matchgate is associated with a {\it character},
whose entries are defined in terms of Pfaffian and Pfaffian Sum.

Three years later, there was great excitement
when Valiant invented holographic algorithms~\cite{valiant08},
where he also introduced planar matchgates.
These matchgates are planar graphs, have a subset of
vertices on the outer face designated as external nodes,  and each
matchgate is associated with a
{\it signature}.
The entries of a signature are defined in terms of
the perfect matching polynomial, $\PerfMatch(\cdot)$.
For planar weighted graphs, this quantity
can be computed by the well-known
Kasteleyn's algorithm~\cite{kasteleyn67} (a.k.a. FKT algorithm~\cite{temperleyfisher61})
in polynomial time,
which uses Pfaffian and a Pfaffian orientation.

These holographic algorithms are quite exotic,
and use a quantum-like superposition of fragments of
computation to achieve custom designed cancellations.
The two basic ingredients of holographic algorithms from~\cite{valiant08}
are matchgates and holographic transformations.
A number of concrete problems are shown to be
polynomial time computable by this novel technique, 
even though they appear to require exponential time,
and minor variations  of which   are NP-hard.
They challenge our perceived boundary of 
what polynomial time computation
can do. Since we don't really have any reasonable absolute lower 
bounds that apply to unrestricted computational models,
our faith in such  well-known conjectures
such as P $\not =$ NP or P $\not =$ ${\rm P}^{\rm \#P}$ are based primarily
on the inability of existing algorithmic techniques to solve
NP-hard or \#P-hard problems in polynomial time.
To maintain this faith,
it is imperative that we gain a better understanding of
 what the new methodology
can or cannot do.
To quote Valiant~\cite{valiant08}, ``any proof of P $\not =$ NP 
may need to
explain, and not only to imply, the unsolvability''
of NP-complete or \#P-complete problems by this
methodology.
It becomes apparent that there is a fundamental problem
of what are the intrinsic limitations of
these matchgates, and what is the relationship between
characters of general matchgates and signatures of
planar matchgates.

In~\cite{valiant02}, Valiant showed that the
character of every 2-input 2-output
matchgate must satisfy five polynomial identities, called
{\it Matchgate Identities}.  Valiant used this to show that certain
quantum gate cannot be simulated by the characters of
these matchgates.  
In a sequence of two papers~\cite{caichoudhary07a,caichoudharylu09}
a general study of the character theory and the signature theory
of matchgates was undertaken. These papers achieved the following
general results: Essentially there is an equivalence
between the character theory and the signature theory of matchgates,
and a set of {\it useful Grassmann-Pl\"{u}cker identities}
together with the Parity Condition are a necessary and sufficient
condition for a sequence of values to be the signature of a planar
matchgate.
This set of  useful Grassmann-Pl\"{u}cker identities
will be called Matchgate Identities (MGI) in the general sense.
Along the way they also established a concrete characterization
of symmetric signatures, which are signatures whose entries only depend on
the Hamming weight of the index.

However, this proof is tortuous. In particular the proof 
for the signature theory of planar matchgates goes through
characters.  More importantly, there is a subtle but important
 gap in the proof that every planar matchgate signature
must satisfy these Matchgate Identities.  The gap has to do
with the non-uniform and exponentially many ways in which
the induced 
Pfaffian orientations on subgraphs of a planar graph
can introduce a correction factor $(-1)$ to Pfaffian values, relative to
perfect matchings.

%
%
%


In this paper we present a full, self-contained proof that MGI characterize planar matchgate signatures.
This proof does not involve  character theory or any non-planar matchgate.
Moreover, we include a short proof demonstrating that MGI imply the Parity Condition.
Previously this was presented as a separate requirement for matchgates, but now we show  MGI entirely characterize matchgate signatures.
We then revisit and clarify the equivalence between planar matchgates and the original general matchgates.
Along the way we introduce a concise matchgate for the ``crossover gadget'', using only real weights $1$ and $-1$.
Previously the only known such gadget uses complex values. 
Finally, it has been known that the MGI greatly simplify for symmetric 
signatures. By the general theory
any symmetric sequence satisfying MGI must be realizable as the signature of a 
planar matchgate. However, previously this existence is only known
by going through the entire equivalence proof of characters and
signatures, which also uses the only known ``crossover gadget''.
In  this paper,  we present a simple, direct construction 
of a planar matchgate realizing any symmetric sequence satisfying MGI.

The most intricate part of this paper is the proof that planar matchgate signatures must satisfy  MGI.
The  subtle gap in the existing proof  stems from the following.
To compute the signature of a matchgate $G$, we assume it has a fixed Pfaffian orientation $\orientg$.
This induces a natural Pfaffian orientation for every subgraph, $\orientgalpha$, where $\alpha$ is a bitstring specifying  
a removal pattern of the external nodes from $G$.
A Pfaffian orientation may introduce
 an extra $(-1)$ factor, a ``sign change'', to the corresponding 
perfect matching value.
However, whether $\orientg$ has a sign change does not immediately imply if $\orientgalpha$ has a sign change: the presence or absence of the ``sign change'' may itself change between different external node removals!
As there are exponentially many possible bitstrings $\alpha$
for possible removal patterns, this severely complicates any proof trying to show they must satisfy MGI.
We note that Pfaffian orientations are themselves an important topic \cite{thomas06}, and this result is the first to our knowledge concerning the behavior of 
Pfaffian orientations of exponentially many subgraphs under node removals.

Thus, our main goal is to show that the \emph{change} of the sign change occurs in a pattern such that  MGI still hold.
We do so using Theorem~\ref{thm:delta}.
Essentially, it proves the following.
For any two fixed bit positions $i < j$ referencing the external nodes, 
let $b_i b_j  
\in \{0, 1\}^2$ be the bit pattern on these two bits. 
Then, while the sign change 
may be different for different values of $b_i b_j$,
{\it the change of sign change} when we 
go from $b_i b_j$ to $\overline{b_i}~ \overline{b_j}$ is always
the same, independent of the removal pattern on the other external nodes.
This is succinctly expressed as a {\it quadruple product identity}.
Moreover, 
this is in fact  the strongest statement we can say about a pair of nodes
and their change of signs,  (see Fig.\ \ref{fig:complexdelta}).
Fortunately this is also sufficient to prove  MGI.

This paper is organized as follows.
In  Section~\ref{Preliminaries} we define all the concepts
and terminology in the signature theory of planar matchgates. We will also prove
that MGI imply the Parity Condition. We will restrict to planar
matchgates pertaining to signature theory here. The terminology
having to do with general (not necessarily planar) matchgates and
characters will be delayed till Section~\ref{Character}.
In Section~\ref{Pfaffian-Signature-Identities}
we will give a self-contained proof of some
known identities.  This is partly for the convenience of the readers,
partly to give simplified proofs when possible. For example,
the earlier proof of Theorem~\ref{gpi2} from \cite{dresswenzel95}
goes through skew-symmetric bilinear forms and operators acting 
on the exterior
algebra of a module over some commutative ring.
Here we present a direct, elementary proof.
In Section~\ref{Matchgates-sat-MGI}
we prove that every matchgate signature satisfies MGI.
In Section~\ref{sec:realizing} we prove that
MGI are also sufficient to be realizable as a matchgate signature.
Here we also give the simplified construction of a crossover gadget.
In Section~\ref{Character} we discuss the character theory.
In Section~\ref{Sym-Signatures} we give the direct  construction for 
matchgates realizing symmetric
signatures.
Some concluding remarks are in Section~\ref{Conclusion}.

\section{Preliminaries}\label{Preliminaries}

\paragraph{Matchgate, PerfMatch definitions}
A matchgate is an undirected weighted plane graph $G$ with $k$ distinguished ``external'' nodes on its outer face, ordered in a clockwise order.
(We will see shortly that without loss of generality we may assume
the graph $G$  is connected.
Therefore it is a plane graph, i.e., a planar graph given with a
 particular planar embedding, and the outer face is both uniquely defined and has a connected boundary.)
Without loss of generality,
we assume all edge weights are non-zero; zero weighted edges
can be deleted.
We define the perfect matching polynomial, $\PerfMatch(G)$, as the following:
\begin{equation}\label{eq:perfmatch}
\PerfMatch(G)=\sum_{M\in\mathcal M(G)}\prod_{e\in M}w(e)
\end{equation}
where $\mathcal M(G)$ is the set of all perfect matchings in $G$ and $w(e)$ is the weight of edge $e$ in $G$.
For each length-$k$ bitstring $\alpha$, $G$ defines a subgraph $G^\alpha$
obtained from $G$ by the following operation:
For  all $1 \le i \le k$, if the $i$-th bit $\alpha_i$
of $\alpha$ is $1$, then we remove the $i$-th external node and 
all its incident edges.
Thus, $G^{00\ldots0}=G$, and $G^{11\ldots1}$ is $G$ with \emph{all} external nodes removed.

\paragraph{Signature, perfect matching term definitions}
We define the signature of the matchgate $G$
 as the  vector $\Gamma_G = (\Gamma_G^\alpha)$, indexed by $\alpha \in \{0, 1\}^k$, as follows:
\begin{equation}
\Gamma_G^\alpha=\PerfMatch(G^\alpha)=\sum_{M\in\mathcal M(G^\alpha)}\prod_{e\in M}w(e).
\end{equation}
For a perfect matching $M\in \mathcal M(G^\alpha)$ 
we define $\Gamma_G^\alpha(M) = \prod_{e\in M}w(e)$ as the \emph{perfect matching term}, equal to the product of the edge weights for the matching $M$.
Where $G$ is clear, we omit the subscript $G$, and 
write $\Gamma^\alpha$ for $\Gamma_G^\alpha$,
and $\Gamma^\alpha(M)$ for $\Gamma_G^\alpha(M)$.

\paragraph{Pfaffian orientations, induced Pfaffian orientations}
For a plane graph $G$, we can compute $\PerfMatch(G)$ using  
Kasteleyn's algorithm~\cite{kasteleyn67} via the Pfaffian. 
A \emph{Pfaffian orientation} on $G$
is an assignment of a direction to each edge of $G$  in  such a way that
each face, except possibly the outer face, has an odd number of 
clockwise oriented edges when one traverses the boundary of the
face.
Such an orientation is easy to compute for any plane graph.
Note that any ``bridge edge'' (an edge both sides of which belong
to the same face) can be oriented arbitrarily, and the traversal
of the face will count the edge twice, once clockwise and once
counter-clockwise. 
Under a \emph{Pfaffian orientation} on $G$,
the Pfaffian of a  skew-symmetric  matrix defined by $G$
and the orientation, defined below, is equal to $\pm \PerfMatch(G)$.
We fix a single Pfaffian orientation for $G$ 
and call the directed graph $\orientg$.
Note that $\orientg^\alpha$, which is obtained from $\orientg$
by removing some vertices and  their incident edges according to $\alpha$, 
is \emph{also} Pfaffian-oriented.
This is because we only remove zero or more vertices on the outer face, and the removal of these vertices and their incident edges do not create any \emph{non-outer face}.
Thus a single fixed Pfaffian
orientation for $G$ induces a set of Pfaffian 
orientations, one for each $G^\alpha$.
We consider a Pfaffian orientation for $G$ is fixed,
and each $G^\alpha$ inherits the induced Pfaffian  orientation.

\paragraph{Skew-symmetric matrix}
Now we assume the vertices of $G$ are labeled by a totally
ordered set, for example, $1 < 2 < \ldots < n$.
Given an orientation on $G$, we define a 
skew-symmetric adjacency matrix $A =A_{\orientg}$  for $\orientg$
as follows.
Let $(u,v)$  be a directed edge from $u$ to $v$ in   $\orientg$.
Then $A_{u,v}=w(\{u,v\})$, and $A_{v,u}=-w(\{u,v\})$,
where $w(\{u,v\})$ is the weight of the corresponding edge in $G$.
Note that if the labels $u<v$, then the entry above the diagonal $A_{u,v}=w(\{u,v\})$, and its reflected entry below the diagonal $A_{v,u}=-w(\{u,v\})$.
If $u>v$ then the entry above the diagonal $A_{v,u}=-w(\{u,v\})$ and its reflected entry below the diagonal $A_{u,v}=w(\{u,v\})$ instead.
The diagonal and all other locations $(u,v)$
not corresponding to an edge in the matrix $A$ are set to $0$.
The lower-left triangle of $A$ is the negation of the upper-right triangle.

\paragraph{Pfaffian}
The Pfaffian of an $n\times n$ matrix, where $n \ge 2$ is even, 
is defined as follows:
\begin{equation}\label{Pfaffian-defined}
\Pf(A)=\sum_\pi\epsilon_\pi A_{i_1,i_2}A_{i_3,i_4},\ldots,A_{i_{n-1},i_{n}}
\end{equation}
where the sum is over all permutations $\pi =
{\scriptsize \begin{pmatrix} 1 & 2 & \ldots & n \\
i_1 & i_2 &  \ldots & i_n \end{pmatrix}}$ 
such that $i_1<i_2$, $i_3<i_4$, $\ldots$, $i_{n-1}<i_n$ 
and $i_1<i_3<i_5<\ldots<i_{n-1}$.
The term $\epsilon_\pi$ is $-1$ or $1$ depending on 
whether the parity of $\pi$ is 
odd or even, respectively.
We note that there is a natural 1-1 correspondence between
permutations $\pi$ in this \emph{canonical} expression
and the set of partitions of $[n]$ into disjoint pairs, which
are potential perfect matchings.
A permutation  $\pi$ corresponds to an actual perfect matching iff
all the pairs are edges.
It is known and easy to verify that the sign $\epsilon_\pi$
can also be computed by the parity of the number of overlapping pairs
($+1$ if it is even, $-1$ if it is odd).
We say $\{i_{2k-1},i_{2k}\}$ and $\{i_{2\ell-1},i_{2\ell}\}$ is
an overlapping pair  iff $i_{2k-1}<i_{2\ell-1}<i_{2k}<i_{2\ell}$ or  $i_{2\ell-1}<i_{2k-1}<i_{2\ell}<i_{2k}$.

We note that the 
term $\epsilon_\pi A_{i_1,i_2}A_{i_3,i_4},\ldots,A_{i_{n-1},i_{n}}$
is the same for any listing of the partition
$[n] = \{i_1, i_2\} \cup \{i_3, i_4\}  \cup \ldots \cup \{i_{n-1}, i_n \}$,
where $\pi =
{\scriptsize \begin{pmatrix} 1 & 2 & \ldots & n \\
i_1 & i_2 &  \ldots & i_n \end{pmatrix}}$,
independent of the ordering of the pairs, as well as the order within 
each pair. We also note that this definition is valid for any
linear order on the vertices; it need not be the set of consecutive
integers from 1 to $n$.  This is particularly relevant when
we consider the Pfaffian of  $\orientg^\alpha$, where
the vertices will inherit the labeling from $G$.

As convention, 
if $n$ is odd, then $\Pf(A)=0$;
if $n$ is zero, then $\Pf(A)=1$.

\paragraph{Relating $\Pf$ to $\PerfMatch$}
If $A=A_{\orientg}$,
we call $\epsilon_\pi A_{i_1,i_2}A_{i_3,i_4},\ldots,A_{i_{n-1},i_{n}}$ a 
\emph{Pfaffian term}.
As observed, there is a 1-to-1 correspondence
between all  non-zero Pfaffian terms and perfect matchings
in $\mathcal M(G)$. 
 If
$M$ is a perfect matching, we denote the corresponding Pfaffian term by
$\Pf_{\orientg}(M)$.
A  perfect matching term has the same value, up to a $\pm$ sign, 
as the corresponding Pfaffian term.
In other words, $\Pf_{\orientg}(M) = \pm\Gamma_G(M)$.
They may indeed differ, even under a Pfaffian orientation.
The heart of the FKT algorithm is the proof that for
the skew symmetric matrix of a 
Pfaffian-oriented graph, either every pair
of corresponding terms are the same, or
every pair of corresponding terms differ by  a minus sign.
Thus,
$\Pf(A_{\orientg})=\pm\PerfMatch(G)$.
This equality is an equality of  polynomials:  
Given a Pfaffian oriented $\orientg$,
there exists an $\epsilon = \pm 1$,
such that
\begin{equation}\label{pftoperfmatch}
\Pf(A_{\orientg})= \epsilon \PerfMatch(G)
\end{equation}
and if (\ref{pftoperfmatch}) holds
for one set of edge weights,
then \emph{every} Pfaffian term 
is $\epsilon$ times its corresponding perfect matching term, for every 
set of weights.

\paragraph{Pfaffian signature definition}
As the orientation in $\orientg$ induces a Pfaffian orientation for all $G^\alpha$, we can naturally refer to $\orientgalpha$.
Note that
$\orientgalpha = \orientg^\alpha$, the oriented graph obtained by
$\orientg$ after removing some vertices and incident edges
according to $\alpha$, in the same way as before.
Also note that
 $A_{\orientgalpha}$ is obtained from 
$A_{\orientg}$ by removing the appropriate columns and rows
indicated by $\alpha$.
We abbreviate $\Pf(A_{\orientgalpha})$ as $\Pf_{\orientg}^\alpha$.
Where $\orientg$ is clear, we just write $\Pf^\alpha$.
With a given Pfaffian orientation on the plane graph $G$, 
and a given labeling of its $k$ external nodes
in  clockwise order, we define the Pfaffian Signature of $\orientg$ to be the vector $(\Pf^\alpha)$ indexed by $\alpha\in\{0,1\}^k$.
Each $\Pf^\alpha$ is a sum of Pfaffian terms, by the definition
of $\Pf(A_{\orientgalpha})$,
under the induced Pfaffian orientation.

Critically, eq.\ (\ref{pftoperfmatch}) is a term by term equation:
For every $\alpha \in \{0, 1\}^k$,
there exists $\epsilon(\alpha) \in \{-1, 1\}$, such that for all $M
\in{\mathcal M}(G^\alpha)$,
\begin{equation}
\Pf_{\orientgalpha}(M)= \epsilon(\alpha) \Gamma_{G^\alpha}(M).
\end{equation}

\paragraph{Matchgate Identities}
We state the Matchgate Identities, or MGI.
\begin{theorem}\label{thm:mgi}
Let $\Gamma$ be the signature of a matchgate with $k$ external nodes.
For any length-$k$ bitstrings  $\alpha,\beta \in \{0, 1\}^{k}$, let
$\alpha\oplus\beta \in \{0, 1\}^{k}$ be
their bitwise XOR, and let $P=\{p_1,\ldots,p_l\}$, where $p_1<\ldots<p_l$,
be the subset of $[k]$ whose characteristic sequence is $\alpha\oplus\beta$.
Here $p_i$ is the $i$-th bit where $\alpha$ and $\beta$ differ.
Then,
the signature $\Gamma$  satisfies: 
\begin{equation}\label{eq:mgi}
\sum_{i=1}^{l}(-1)^{i}\Gamma^{\alpha\oplus e_{p_i}}\Gamma^{\beta\oplus e_{p_i}}=0,
\end{equation}
where $e_j$ denotes a length-$k$ bitstring with a $1$ in the $j$-th index, and $0$ elsewhere.
\end{theorem}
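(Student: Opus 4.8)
The plan is to deduce the identity for the matchgate signature $\Gamma$ from the corresponding purely algebraic identity for the Pfaffian signature $(\Pf^\alpha)$, while carefully tracking the position-dependent sign factors relating the two. Recall from~(\ref{pftoperfmatch}) and its term-by-term refinement that for each $\alpha$ there is a sign $\epsilon(\alpha)\in\{-1,+1\}$ with $\Pf_{\orientgalpha}(M)=\epsilon(\alpha)\,\Gamma_{G^\alpha}(M)$ for every $M\in\mathcal M(G^\alpha)$; summing over all perfect matchings gives $\Pf^\alpha=\epsilon(\alpha)\,\Gamma^\alpha$, hence $\Gamma^\alpha=\epsilon(\alpha)\,\Pf^\alpha$ since $\epsilon(\alpha)^2=1$. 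The Pfaffian signature, being assembled from Pfaffians of principal submatrices of the single skew-symmetric matrix $A_{\orientg}$, satisfies the Grassmann-Pl\"ucker identity of Theorem~\ref{gpi2}; in the present notation this is the clean relation
\begin{equation}\label{eq:pf-gpi}
\sum_{i=1}^{l}(-1)^{i}\,\Pf^{\alpha\oplus e_{p_i}}\Pf^{\beta\oplus e_{p_i}}=0.
\end{equation}
Were the $\epsilon$-factors constant across the sum, factoring them out and applying~(\ref{eq:pf-gpi}) would finish the argument immediately.

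Substituting $\Gamma^\gamma=\epsilon(\gamma)\Pf^\gamma$ into the left side of~(\ref{eq:mgi}) yields
\begin{equation}
\sum_{i=1}^{l}(-1)^{i}\,\epsilon(\alpha\oplus e_{p_i})\,\epsilon(\beta\oplus e_{p_i})\,\Pf^{\alpha\oplus e_{p_i}}\Pf^{\beta\oplus e_{p_i}}.
\end{equation}
The obstacle, which is exactly the gap discussed in the introduction, is that the scalar $s_i:=\epsilon(\alpha\oplus e_{p_i})\,\epsilon(\beta\oplus e_{p_i})$ need not be the same for every $i$: the ``sign change'' genuinely varies across the exponentially many induced Pfaffian orientations $\orientg^{\alpha\oplus e_{p_i}}$, so~(\ref{eq:pf-gpi}) cannot be applied term by term. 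The entire weight of the proof therefore rests on showing that $s_i$ is in fact \emph{independent} of $i$.

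To establish $s_i=s_j$ for any two indices I would study the product
\begin{equation}
s_i s_j=\bigl[\epsilon(\alpha\oplus e_{p_i})\,\epsilon(\alpha\oplus e_{p_j})\bigr]\cdot\bigl[\epsilon(\beta\oplus e_{p_i})\,\epsilon(\beta\oplus e_{p_j})\bigr].
\end{equation}
Each bracket is a product of two $\epsilon$-values whose arguments differ in exactly the two coordinates $p_i,p_j$, that is, precisely the ``quadruple product'' controlled by Theorem~\ref{thm:delta}. In the first bracket set $x=\alpha\oplus e_{p_i}$, so $\alpha\oplus e_{p_j}=x\oplus e_{p_i}\oplus e_{p_j}$ and $x$ has bit pattern $(\overline{\alpha_{p_i}},\alpha_{p_j})$ on $\{p_i,p_j\}$; in the second bracket set $x=\beta\oplus e_{p_i}$, and since $\alpha$ and $\beta$ differ at both $p_i$ and $p_j$ (both lie in $P$) this $x$ has pattern $(\alpha_{p_i},\overline{\alpha_{p_j}})$ there, the complement of the first. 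Theorem~\ref{thm:delta} asserts that this change-of-sign-change depends only on $p_i,p_j$ and on the pattern at those two positions, is invariant under complementing that pattern, and is unaffected by the bits elsewhere, so in particular it does not matter that $\alpha$ and $\beta$ may also disagree on $P\setminus\{p_i,p_j\}$. Hence the two brackets are equal, giving $s_i s_j=(\text{common value})^2=1$; as $s_i\in\{-1,+1\}$ this forces $s_i=s_j$.

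Writing $s$ for this common value, the left side of~(\ref{eq:mgi}) equals $s\sum_{i=1}^{l}(-1)^{i}\Pf^{\alpha\oplus e_{p_i}}\Pf^{\beta\oplus e_{p_i}}$, which vanishes by~(\ref{eq:pf-gpi}), completing the proof. I expect the invariance $s_i=s_j$ to be the sole real difficulty: it is where the many distinct induced Pfaffian orientations interact, and it is exactly what Theorem~\ref{thm:delta} is engineered to govern; the reduction to the Pfaffian Grassmann-Pl\"ucker identity and the algebra around it are routine bookkeeping by comparison.
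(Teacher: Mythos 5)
Your proposal is correct and follows essentially the same route as the paper: reduce to the Pfaffian Signature Identities (via the Grassmann-Pl\"ucker identity of Theorem~\ref{gpi2}) and then show the per-term sign ratios $\delta(\alpha\oplus e_{p_i})\delta(\beta\oplus e_{p_i})$ are constant by forming the quadruple product and invoking Theorem~\ref{thm:delta}. The only cosmetic differences are that the paper pairs the four factors so that the bit patterns at $p_i,p_j$ match directly (avoiding your appeal to complementation invariance, which in any case follows trivially from commutativity of the product), and the paper explicitly notes that one need only consider nonzero terms, exactly where all the relevant $\delta$'s are defined.
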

We will show that this is a complete characterization of what vectors
 can be planar matchgate signatures.

\paragraph{Parity Condition}
A perfect matching has an even number of vertices. Therefore it follows that
$\PerfMatch(G^\alpha)=0$, whenever $G^\alpha$ has an odd number of vertices.
Thus, either for all $\alpha$ of odd Hamming weight,
or for all $\alpha$ of even  Hamming weight,
 $\Gamma^\alpha=0$.

\paragraph{Matchgate Identities Imply Parity Condition}
Here we show that this Parity Condition is  a consequence 
of MGI.
\begin{theorem}\label{MGI-implies-Parity}
If a vector $\Gamma$ obeys the MGI, then it also obeys the Parity Condition.
\end{theorem}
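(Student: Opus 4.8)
The plan is to argue by contradiction through a minimal counterexample, using nothing beyond the Matchgate Identities of Theorem~\ref{thm:mgi}. First I would restate the target in a convenient form: the Parity Condition is equivalent to the assertion that any two indices $\alpha,\gamma$ with $\Gamma^\alpha\neq 0$ and $\Gamma^\gamma\neq 0$ satisfy that the Hamming distance $|\alpha\oplus\gamma|$ is even, since $|\alpha\oplus\gamma|\equiv |\alpha|+|\gamma|\pmod 2$. So I would suppose, for contradiction, that the set of pairs of nonzero-entry indices at \emph{odd} distance is nonempty, and fix a pair $\alpha,\gamma$ realizing the \emph{minimum} odd distance $d$ among all such pairs.

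As a warm-up and sanity check I would record the simplest instance of MGI: choosing the two bitstrings of Theorem~\ref{thm:mgi} to differ in a single position $p$ (so $l=1$) collapses the sum to $-\Gamma^{\alpha}\Gamma^{\beta}=0$, i.e.\ $\Gamma^\alpha\Gamma^\beta=0$ for any two \emph{adjacent} indices. This already handles the base case $d=1$, and it signals the general strategy, namely that the full MGI should let us annihilate all terms of the sum but one.

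The main step is to apply MGI not to $\alpha,\gamma$ directly (which would give a trivial $0=0$, since each first factor is a neighbor of $\alpha$ and hence zero), but to a shifted pair. Let $P=\{p_1<\cdots<p_d\}$ be the positions where $\alpha$ and $\gamma$ differ, fix any $p_{i_0}\in P$, and set $\mu=\alpha\oplus e_{p_{i_0}}$ and $\nu=\gamma\oplus e_{p_{i_0}}$. Since $\mu\oplus\nu=\alpha\oplus\gamma$ has support exactly $P$, the identity reads $\sum_{i=1}^{d}(-1)^i\,\Gamma^{\mu\oplus e_{p_i}}\Gamma^{\nu\oplus e_{p_i}}=0$, and its $i=i_0$ term is precisely $(-1)^{i_0}\Gamma^{\alpha}\Gamma^{\gamma}$. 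For each remaining term $i\neq i_0$, I would compute that the first-factor index $\mu\oplus e_{p_i}=\alpha\oplus e_{p_{i_0}}\oplus e_{p_i}$ lies at Hamming distance $|P\setminus\{p_{i_0},p_i\}|=d-2$ from $\gamma$, which is odd and strictly smaller than $d$. By minimality of $d$ together with $\Gamma^\gamma\neq 0$, this forces $\Gamma^{\mu\oplus e_{p_i}}=0$, so every term except $i_0$ vanishes. The identity then reduces to $\Gamma^{\alpha}\Gamma^{\gamma}=0$, contradicting $\Gamma^\alpha\neq 0\neq\Gamma^\gamma$, and the Parity Condition follows.

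The part I expect to require the most care is the index bookkeeping that kills the off-diagonal terms: I must check that shifting by $e_{p_{i_0}}$ preserves the MGI difference set $P$ (so that a single instance of Theorem~\ref{thm:mgi} governs the whole sum), and that removing the two \emph{distinct} coordinates $p_{i_0},p_i\in P$ from the support of $\alpha\oplus\gamma$ drops the distance to $\gamma$ by exactly $2$, landing on an odd value in $[1,d)$ whenever $d\ge 3$. Once this arithmetic is verified the minimality hypothesis does the rest, and it is worth emphasizing that the argument is purely algebraic in the entries of $\Gamma$: it invokes no planarity, no perfect matchings, and no Pfaffian, so the Parity Condition is a genuine formal consequence of MGI alone.
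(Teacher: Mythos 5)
Your proof is correct and follows essentially the same strategy as the paper's: both take a minimal counterexample (minimal odd distance for you, minimal odd Hamming weight after translation for the paper), apply the MGI to the pair shifted by one coordinate of the difference set so that the target product $\Gamma^\alpha\Gamma^\gamma$ appears as a single term, and use minimality to annihilate every other term of the sum. The only difference is bookkeeping: the paper first replaces $\Gamma$ by $\widetilde{\Gamma}^\gamma=\Gamma^{\gamma\oplus\alpha}$ (using that MGI depend only on XOR differences) so that minimality can be phrased via Hamming weight, whereas you work with pairwise Hamming distances directly, which inlines that translation step.
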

\begin{proof}
For a contradiction assume $\Gamma^\alpha\neq0$
and $\Gamma^\beta\neq0$, for some $\alpha$ and
$\beta$
of  even and odd Hamming weight respectively.
We define $\widetilde{\Gamma}$ 
by $\widetilde{\Gamma}^\gamma=\Gamma^{\gamma\oplus{\alpha}}$.
Since $\gamma \oplus \gamma' = (\gamma\oplus{\alpha}) \oplus
(\gamma' \oplus{\alpha})$, the vector $\Gamma$ obeys the MGI
implies that the vector $\widetilde{\Gamma}$ also obeys the MGI.
Also $\widetilde{\Gamma}^{00\ldots0} = \Gamma^{\alpha}  \neq0$
and $\widetilde{\Gamma}^{\beta \oplus\alpha}  = \Gamma^{\beta}  \neq0$.
Note that $\beta \oplus\alpha$ has an odd Hamming weight.

Let $\beta' = \{p_1, \ldots, p_l\}$ be of minimum odd Hamming weight such that 
$\widetilde{\Gamma}^{\beta'}\neq0$, where $l \ge 1$.
Now invoke the MGI on the bitstrings $00\ldots0\oplus e_{p_1}$ 
and $\beta'\oplus e_{p_1}$.
That gives
\begin{equation}\label{eq:parity}
0 = -\widetilde{\Gamma}^{00\ldots0}\widetilde{\Gamma}^{\beta'} 
+ \sum_{i=2}^{l}(-1)^i\widetilde{\Gamma}^{00\ldots0\oplus e_{p_1}\oplus e_{p_i}}\widetilde{\Gamma}^{\beta'\oplus e_{p_1}\oplus e_{p_i}}.
\end{equation}
If $l =1$ then the sum $\sum_{i=2}^{l}$ is vacuous,
and we have a contradiction.
So $l \ge 2$ and we consider each term in the sum  $\sum_{i=2}^{l}$.
Observe that for every $2 \le i \le l$,
 $\beta'\oplus e_{p_1}\oplus e_{p_i}$ has an odd 
Hamming weight \emph{less} than that of $\beta'$, 
hence $\widetilde{\Gamma}^{\beta'\oplus e_{p_1}\oplus e_{p_i}}
=0$. 
Thus the sum  $\sum_{i=2}^{l}$ is zero
but $\widetilde{\Gamma}^{00\ldots0}\widetilde{\Gamma}^{\beta'} 
\not =0$, a contradiction.
\end{proof}

Nonetheless, in further development of the signature theory,
our experience is that the Parity Condition
 is a good criterion to apply first.

\paragraph{The Sign}
MGI were first introduced by Valiant in \cite{valiant02}
in the context of proving certain 2-input 2-output quantum gate cannot
be realized by a matchgate.  It was shown that 2-input 2-output matchgates
must satisfy certain identities which are named Matchgate Identities.
These identities are actually concerned with \emph{characters} of
matchgates. These so-called characters are defined directly
in terms of Pfaffians, and their underlying matchgates need not be
planar by definition.  In the case of 2-input 2-output matchgates, these
character values constitute a 4 by 4 matrix, 
called a character matrix.  Subsequently in \cite{caichoudhary07a}
and \cite{caichoudharylu09}, this theory is generalized to
matchgates of an arbitrary number of external nodes. 
The ultimate result is that 
there is an equivalence of matchgate characters
(of not necessarily planar matchgates) and matchgate signatures
(of planar matchgates). See Section~\ref{Character}. Furthermore
Matchgate Identities (together with the Parity Condition) are 
a necessary and sufficient condition for a vector of values to
be the signature of a (planar) matchgate.
By Theorem~\ref{MGI-implies-Parity}, in fact the Matchgate Identities 
already logically imply the Parity Condition.

The existing proof of the equivalence  of
being a matchgate signature and satisfaction of MGI (together with 
parity requirements) is quite long and tortuous.
In particular it goes through characters.
More importantly, there is a gap in the 
existing proof that 
Matchgate Identities are a necessary condition for a matchgate signature.
The gap is to exactly account for the change of signs from Pfaffians to signatures.
We will rectify this situation.
Our new proof is direct and self-contained; we show that Matchgate Identities are a necessary condition for matchgate signatures without going through characters.

We will first establish the Pfaffian Signature Identities.
\begin{theorem}\label{thm:pfafmgi}
Let $\orientg$ be a plane graph with a Pfaffian orientation and
$k$ external nodes.
For any length-$k$ bitstrings  $\alpha,\beta \in \{0, 1\}^{k}$, let
$\alpha\oplus\beta \in \{0, 1\}^{k}$ be
their bitwise XOR, and let $P=\{p_1,\ldots,p_l\}$, where $p_1<\ldots<p_l$,
be the subset of $[k]$ whose characteristic sequence is $\alpha\oplus\beta$.
Then,
\begin{equation}\label{eq:pfafmgi}
\sum_{i=1}^l (-1)^i \Pf^{\alpha\oplus e_{p_i}}\Pf^{\beta\oplus e_{p_i}}=0.
\end{equation}
\end{theorem}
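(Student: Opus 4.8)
This is the Pfaffian version of the Matchgate Identities (MGI). The difference from Theorem~\ref{thm:mgi} is that here we work directly with Pfaffians $\Pf^\alpha = \Pf(A_{\orientgalpha})$ rather than with $\PerfMatch$ signatures $\Gamma^\alpha$. The crucial advantage is that Pfaffians, unlike perfect matching polynomials, are *algebraic* objects attached to a single fixed skew-symmetric matrix $A_{\orientg}$. So the sign-change subtlety that the introduction warns about (the $\epsilon(\alpha)$ factors relating $\Pf^\alpha$ to $\Gamma^\alpha$) does not appear here at all; those corrections are exactly what Section~\ref{Matchgates-sat-MGI} must later handle to pass from this theorem to Theorem~\ref{thm:mgi}. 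So this theorem should be a purely algebraic identity about sub-Pfaffians of a fixed matrix, and I expect the clean tool to be a Grassmann–Plücker–type identity.

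**The plan.**The plan is to exploit the fact that, unlike the signature entries $\Gamma^\alpha$, each Pfaffian entry $\Pf^\alpha = \Pf(A_{\orientgalpha})$ is literally the Pfaffian of a principal submatrix of the \emph{single} fixed skew-symmetric matrix $A = A_{\orientg}$: the rows and columns retained are exactly the vertices of $\orientg$ not deleted by $\alpha$, and they inherit the one fixed total order on the vertices of $G$. Consequently (\ref{eq:pfafmgi}) carries no graph-theoretic or ``sign change'' content whatsoever; it is a pure algebraic identity among sub-Pfaffians of one matrix. This is precisely the shape of a \gp{} identity for Pfaffians, so the whole statement should follow by specializing Theorem~\ref{gpi2} to $A$.

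First I would set up the dictionary between bitstrings and index sets. Let $I$ be the set of vertices of $\orientg$ retained under $\alpha$ and $J$ the set retained under $\beta$. All internal vertices, and all external nodes on which $\alpha$ and $\beta$ agree, lie in $I \cap J$; the symmetric difference $I \,\triangle\, J$ consists exactly of the external nodes at the positions where $\alpha$ and $\beta$ differ, namely $x_{p_1},\ldots,x_{p_l}$, listed in increasing position order $p_1 < \cdots < p_l$. The key elementary observation is that toggling the $p_i$-th bit of $\alpha$ toggles membership of the external node $x_{p_i}$ in the retained set: whether $x_{p_i}$ was kept or deleted, the graph $\orientg^{\alpha\oplus e_{p_i}}$ retains exactly the index set $I \,\triangle\, \{x_{p_i}\}$. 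Hence, with $A[\,\cdot\,]$ denoting the principal submatrix on the indicated index set under the inherited order,
\begin{equation*}
\Pf^{\alpha\oplus e_{p_i}} = \Pf\big(A[\,I \,\triangle\, \{x_{p_i}\}\,]\big), \qquad \Pf^{\beta\oplus e_{p_i}} = \Pf\big(A[\,J \,\triangle\, \{x_{p_i}\}\,]\big).
\end{equation*}

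With this in hand, (\ref{eq:pfafmgi}) reads $\sum_{i=1}^{l}(-1)^{i}\,\Pf(A[\,I \,\triangle\, \{x_{p_i}\}\,])\,\Pf(A[\,J \,\triangle\, \{x_{p_i}\}\,]) = 0$, the sum ranging over the elements $x_{p_i}$ of $I \,\triangle\, J$, which is exactly an instance of the \gp{} identity for Pfaffians (Theorem~\ref{gpi2}) applied to $A$ with the pair of index sets $I$ and $J$. I would therefore invoke Theorem~\ref{gpi2} directly. The hard part, and essentially the only point requiring genuine care, is reconciling the signs: Theorem~\ref{gpi2} attaches to the term that ``moves'' an element $c \in I \,\triangle\, J$ a sign governed by the rank of $c$ within $I \,\triangle\, J$ in the total order, whereas (\ref{eq:pfafmgi}) uses $(-1)^i$, indexed by the rank $i$ of the differing position $p_i$. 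These coincide provided the external nodes inherit, within the fixed vertex order defining the Pfaffian, the same relative order as their clockwise labeling; then $x_{p_i}$ has rank $i$ in $I \,\triangle\, J$, and any residual global sign is immaterial since the right-hand side is $0$. (When $l$ is odd, $|I|+|J|$ is odd, so in every term the two submatrices have opposite parity and one Pfaffian vanishes, making the identity trivially hold.) Thus the obstacle is purely this sign bookkeeping — matching the rank-based convention of the general \gp{} identity to the $(-1)^i$ of the signature indexing — rather than anything structural; once Theorem~\ref{gpi2} is granted, the reduction is immediate.
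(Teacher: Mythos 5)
Your proposal is correct and follows essentially the same route as the paper: define $I$ and $J$ as the retained vertex sets (internal nodes plus external nodes with bit $0$) under $\alpha$ and $\beta$ respectively, observe that $I \Delta J$ is exactly the set of external nodes where $\alpha$ and $\beta$ differ, and invoke Theorem~\ref{gpi2} via the resulting term-for-term correspondence. In fact you are somewhat more careful than the paper's own (very terse) proof, since you explicitly note that the rank-based signs of Theorem~\ref{gpi2} match the $(-1)^i$ of (\ref{eq:pfafmgi}) only because the external nodes' labels respect their clockwise ordering, and that the leftover global factor of $-1$ is harmless.
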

Because of the ``sign change'' between $\Pf^\alpha$ and $\Gamma^\alpha$, this statement does not immediately imply Theorem~\ref{thm:mgi}.
We need to know that the extra $-1$ factors between $\Pf^\alpha$ and $\Gamma^\alpha$ appear in just such a pattern that the $-1$ factors all cancel each other in the Matchgate Identities in (\ref{eq:mgi}) {\it relative to}
 the Pfaffian Signature Identities in (\ref{eq:pfafmgi}).
Before doing so, we will prove Theorem~\ref{thm:pfafmgi}.

\section{Proving the Pfaffian Signature Identities}\label{Pfaffian-Signature-Identities}
Theorem~\ref{thm:pfafmgi} will follow from the {\gp} Identities over Pfaffian minors of a matrix.
We state the following definition of the {\gp} Identities for a skew-symmetric matrix $A$.
In writing $\Pf(i_1,i_2,\ldots,i_L)$ we mean the Pfaffian of the
$L\times L$ matrix whose rows and columns are the $i_1,i_2,\ldots,i_L$-th rows and columns of $A$, in that order.
The order matters: $\Pf(i_1,i_2,\ldots) = -\Pf(i_2,i_1,\ldots)$, for instance.
When we write $\Pf(i_1, i_2, \ldots, \widehat{i_k}, \ldots, i_K)$,
 the $\widehat{i_k}$  means that $i_k$ is explicitly excluded from that list.
\begin{theorem}[The {\gp} Identities]\label{gpi1}
Let $I=\{i_1, i_2, \ldots, i_L\}, J=\{j_1, j_2, \ldots, j_K\}$ be subsets of indices of $A$, where $i_1< i_2 < \ldots < i_L$ and $j_1 < j_2 < \ldots < j_K$.
Then
{\small \begin{equation}\label{eq:gpi1}
\sum_{\ell=1}^L (-1)^{\ell-1}
\Pf(j_\ell, i_1, \ldots, i_K)
\Pf(j_1, \ldots, \widehat{j_\ell}, \ldots, j_L)
 + \sum_{k = 1}^K (-1)^{k-1}
\Pf(i_1, \ldots, \widehat{i_k}, \ldots, i_K)
\Pf(i_k, j_1,  \ldots, j_L)
= 0
\end{equation} }
\end{theorem}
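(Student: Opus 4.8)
The plan is to prove (\ref{eq:gpi1}) by a direct combinatorial argument: expand every Pfaffian through its definition (\ref{Pfaffian-defined}) as a signed sum over partitions of its index set into pairs, and show that the resulting terms cancel against one another via a sign-reversing involution. First I would record the parity constraint that the statement forces: for the four Pfaffian factors to have even order, and hence to possibly be nonzero, both $|I|$ and $|J|$ must be odd, and I assume this throughout (the remaining cases being degenerate). I would also fix once and for all the rule that reordering the argument sequence of a Pfaffian multiplies each individual pair-partition term by the sign of the reordering permutation; this follows from the alternating property $\Pf(i_1,i_2,\ldots)=-\Pf(i_2,i_1,\ldots)$ applied term by term, together with the listing-independence noted after (\ref{Pfaffian-defined}).

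The structural heart of the proof is that, once expanded, both sums in (\ref{eq:gpi1}) enumerate the same objects. A term of the first sum pairs a matching of $\{j_\ell\}\cup I$ with a matching of $J\setminus\{j_\ell\}$; since $j_\ell$ is the only $J$-index in the first factor, these glue into a perfect matching $M$ of $I\cup J$ containing exactly one edge between $I$ and $J$, namely the edge meeting $j_\ell$. A term of the second sum likewise yields a perfect matching of $I\cup J$ with a single crossing edge, now recorded by its $I$-endpoint $i_k$. Because $|I|$ is odd, every perfect matching of $I\cup J$ has an odd number of crossing edges, so matchings with three or more crossings never appear, and both sums range precisely over the single-crossing matchings. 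Given such an $M$ with crossing edge $\{i_k,j_\ell\}$, the decomposition is forced in each sum, so $M$ occurs exactly once in the first sum (at index $\ell$) and exactly once in the second (at index $k$), and in no other term. The involution is then the obvious pairing of these two occurrences.

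It remains to check that the two occurrences carry opposite signs, and this is where I expect the real work to lie. Writing each occurrence as $\pm\,\mathrm{val}(M)$ for a common oriented monomial $\mathrm{val}(M)$ in the entries of $A$, the first occurrence contributes $(-1)^{\ell-1}$ times the reordering sign from the sequence $(j_\ell,i_1,\ldots,i_L,j_1,\ldots,\widehat{j_\ell},\ldots,j_K)$ to a fixed reference order, while the second contributes $(-1)^{k-1}$ times the reordering sign from $(i_1,\ldots,\widehat{i_k},\ldots,i_L,i_k,j_1,\ldots,j_K)$. I would compare them by computing the single permutation carrying the second sequence to the first, which I expect to have sign $(-1)^{\ell+k-1}$, obtained by sliding $i_k$ back into its sorted position among the $i$'s and then moving $j_\ell$ to the front. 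Combined with the explicit factors $(-1)^{\ell-1}$ and $(-1)^{k-1}$, the two contributions then become exactly negatives of one another, uniformly in $M$, so the involution is sign-reversing and the whole sum vanishes. The main obstacle is organizing this sign bookkeeping cleanly; I would do so by tracking one vertex at a time and isolating the crossing edge, whose two orientations $A_{j_\ell,i_k}$ and $A_{i_k,j_\ell}$ are the ultimate source of the cancelling minus sign, while checking that the $I$-internal and $J$-internal edges contribute identically to both occurrences.
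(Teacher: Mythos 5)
Your proof is correct --- the sign bookkeeping you deferred does work out exactly as you predicted --- but your route is genuinely different from the paper's. The paper's proof is a three-line algebraic substitution: it expands each mixed Pfaffian along its first listed index, $\Pf(j_\ell, i_1, \ldots, i_L) = \sum_{k=1}^{L}(-1)^{k-1}\Pf(j_\ell,i_k)\Pf(i_1,\ldots,\widehat{i_k},\ldots,i_L)$, and symmetrically $\Pf(i_k, j_1, \ldots, j_K) = \sum_{\ell=1}^{K}(-1)^{\ell-1}\Pf(i_k,j_\ell)\Pf(j_1,\ldots,\widehat{j_\ell},\ldots,j_K)$ (reading the statement with the intended indices; the paper's statement swaps $K$ and $L$ in places), substitutes both into the left-hand side of (\ref{eq:gpi1}), and lets the resulting double sum vanish pairwise because $\Pf(j_\ell,i_k)+\Pf(i_k,j_\ell)=0$. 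Your involution is this same cancellation pushed down to individual pair-partition monomials: grouping your matched terms by the crossing edge $\{i_k,j_\ell\}$ recovers exactly the paper's pairwise cancellation, and the $2\times 2$ Pfaffians $\Pf(j_\ell,i_k)=A_{j_\ell,i_k}$ and $\Pf(i_k,j_\ell)=A_{i_k,j_\ell}$ are precisely the ``two orientations of the crossing edge'' you identified as the source of the minus sign. Your permutation sign is indeed $(-1)^{\ell+k-1}$, giving relative sign $(-1)^{\ell-k}\cdot(-1)^{\ell+k-1}=-1$ between the two occurrences, so the involution is sign-reversing as claimed. What your approach buys is self-containedness (only the definition (\ref{Pfaffian-defined}) and the term-by-term alternating rule are used, rather than citing the expansion identity); what it costs is the bookkeeping, including one fact you use silently: the term of the glued matching with respect to the concatenated sequence equals the product of the two factors' terms, which holds because no edge internal to the first block can cross an edge internal to the second.

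One genuine caveat: your gluing interpretation tacitly assumes $I\cap J=\emptyset$, and the theorem is stated --- and later used --- without that assumption. In the paper's proof of Theorem~\ref{gpi2}, the identity (\ref{eq:gpi1}) is applied to overlapping sets $I,J$, and only the terms whose \emph{moved} element lies in $I\cap J$ are discarded as zero; the remaining terms can involve a shared index $x\in I\cap J$ appearing once in each factor, so they do not vanish, yet the two factors then no longer glue to a perfect matching of the set $I\cup J$ (the shared index is covered twice) and your enumeration language breaks. The repair is cheap, in either of two ways: run the involution on pair partitions of \emph{positions} in the two argument sequences, since nothing in your sign computation uses distinctness of the underlying values; or reduce to the disjoint case by duplicating shared indices --- form the skew-symmetric matrix indexed by the disjoint union $I\sqcup J$ of copies, with entries copied from $A$, and note that the disjoint-case identity for that matrix reads verbatim as (\ref{eq:gpi1}) for $A$. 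With either patch your argument proves the statement in the generality the paper actually needs.
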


Theorem \ref{gpi1} has the following short proof \cite{valiant02, murota00} originally from \cite{ohta92}.
\begin{proof}[Proof of Theorem \ref{gpi1}]
From the definition of Pfaffian:
\begin{align}
& \Pf(j_\ell, i_1,  \ldots, i_K) = \sum_{k=1}^K (-1)^{k-1} \Pf(j_\ell,i_k)   \Pf(i_1,\ldots,\widehat{i_k},\ldots,i_K)\\
&\Pf(i_k, j_1, \ldots, j_L)  = \sum_{\ell=1}^L (-1)^{\ell-1} \Pf(i_k,j_\ell)\Pf(j_1,\ldots,\widehat{j_\ell},\ldots,j_L)\\
\intertext{and also}
&\Pf(j_\ell,i_k)+\Pf(i_k,j_\ell)=0.
\end{align}
The proof is completed by substituting these into the left hand side of eq.\ (\ref{eq:gpi1}).
\end{proof}
There is another form of these identities which is more closely related to the Pfaffian Signature Identities.
We state this theorem next.
An earlier proof of Theorem~\ref{gpi2} appears in \cite{dresswenzel95}.
They go through skew-symmetric bilinear forms and operators acting on the exterior algebra $\Lambda(M)$ of an $R$-module $M$ over some commutative ring $R$.
Here we present a direct, elementary proof.

\begin{theorem}\label{gpi2}
Let $A,I,J$ be as above.
For a subset $S$ of indices of $A$, we write $\Pf(S)$ when $S$ is listed in increasing order.
Let $D=I \Delta J=\{k_1,\ldots,k_m\}$
(listed in increasing order) be the symmetric difference of $I,J$. Then
\begin{equation}\label{eq:gpi2}
\sum_{s=1}^m (-1)^{s-1}
\Pf(I \Delta \{k_s\})
\Pf(J \Delta \{k_s\})
=0
\end{equation}
\end{theorem}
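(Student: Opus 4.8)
The plan is to derive Theorem~\ref{gpi2} directly from the Grassmann-Pl\"ucker Identities of Theorem~\ref{gpi1}, which already have an elementary proof, by matching the two identities term by term and reconciling their sign conventions. The essential observation is that both identities encode the same data but package it differently: Theorem~\ref{gpi1} lists the moved index first inside each Pfaffian (concatenation order) and signs terms by position within $I$ or $J$ separately, whereas Theorem~\ref{gpi2} writes every Pfaffian in increasing order and signs terms by position within $D = I \Delta J$. I would also note at the outset that if $|I|$ and $|J|$ are not both odd, then every Pfaffian in eq.~\eqref{eq:gpi2} has odd order and vanishes, so the identity holds trivially; thus we may assume the parities that make all the relevant orders even.

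First I would observe that in the two sums of eq.~\eqref{eq:gpi1}, every term in which the moved index already belongs to the other set is identically zero: $\Pf(j_\ell, i_1, \ldots)$ repeats an index whenever $j_\ell \in I$, and such a Pfaffian vanishes. Hence the first sum contributes only for $j_\ell \in J \setminus I$ and the second only for $i_k \in I \setminus J$, and together these indices range exactly over $D = I \Delta J$. This yields a bijection between the surviving terms of eq.~\eqref{eq:gpi1} and the terms of eq.~\eqref{eq:gpi2}: an element $k_s \in D$ either equals some $j_\ell \in J \setminus I$, where $I \Delta \{k_s\} = I \cup \{k_s\}$ and $J \Delta \{k_s\} = J \setminus \{k_s\}$, matching the first sum; or it equals some $i_k \in I \setminus J$, matching the second sum symmetrically.

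Next, for each surviving term I would convert the concatenation-ordered Pfaffian of Theorem~\ref{gpi1} into the increasing-ordered Pfaffian of Theorem~\ref{gpi2}. Moving $k_s$ from the front of the list to its correct rank in the combined set costs a sign by antisymmetry of the Pfaffian under index permutations, while the other factor $\Pf(J \setminus \{j_\ell\})$ or $\Pf(I \setminus \{i_k\})$ is already in increasing order and needs no correction. Writing $a$ and $b$ for the number of elements of $I$ and $J$ respectively lying below $k_s$, the position-within-$J$ (resp. within-$I$) sign contributes $(-1)^{\ell-1} = (-1)^b$ (resp. $(-1)^{k-1} = (-1)^a$), and the reordering contributes $(-1)^a$ (resp. $(-1)^b$), so the accumulated sign is $(-1)^{a+b}$ in both cases.

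The main obstacle, and the only real content beyond Theorem~\ref{gpi1}, is verifying that this accumulated sign $(-1)^{a+b}$ equals the target sign $(-1)^{s-1}$ of eq.~\eqref{eq:gpi2}. For this I would split the count of elements below $k_s$ by membership: let $a_0$ count elements of $I \cap J$ below $k_s$, and let $a_1$, $b_1$ count elements of $I \setminus J$ and $J \setminus I$ below $k_s$, so that $a = a_0 + a_1$, $b = a_0 + b_1$, and $s - 1 = a_1 + b_1$ (the rank of $k_s$ within $D$). Then $a + b = 2a_0 + a_1 + b_1 \equiv a_1 + b_1 = s - 1 \pmod 2$, so each term of eq.~\eqref{eq:gpi2} equals the corresponding surviving term of eq.~\eqref{eq:gpi1}. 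Summing over $k_s \in D$ and invoking Theorem~\ref{gpi1} then gives $0$, establishing eq.~\eqref{eq:gpi2}.
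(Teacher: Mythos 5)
Your proposal is correct and follows essentially the same route as the paper's own proof: both derive eq.~(\ref{eq:gpi2}) from Theorem~\ref{gpi1} by discarding the terms whose moved index lies in $I \cap J$ (which vanish due to a repeated index), matching the surviving terms bijectively with those of eq.~(\ref{eq:gpi2}), and reconciling the signs by counting elements below the moved index split according to membership in $I \cap J$, $I \setminus J$, and $J \setminus I$ --- your $a_0, a_1, b_1$ are exactly the paper's $b, c, a$. The only difference is cosmetic (your preliminary parity remark and the parallel ``resp.''\ treatment of the two cases, where the paper handles one case and appeals to symmetry).
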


\begin{proof}[Proof of Theorem~\ref{gpi2}]
We prove Theorem~\ref{gpi2} by Theorem~\ref{gpi1}.

Considering a term in eq.\ (\ref{eq:gpi1}), and
let $x$ be the element being moved from the index set of 
one Pfaffian to another.
If $x \in I \cap J$, clearly the term is 0.
It follows that there is a one-to-one correspondence between the remaining terms in eq.\ (\ref{eq:gpi1}) and (\ref{eq:gpi2}).
All that remains is showing that each such term in eq.\ (\ref{eq:gpi1}) has the same sign as its counterpart in (\ref{eq:gpi2}).

Suppose $x \in J-I$.
In that case, the term in eq.\ (\ref{eq:gpi1}) is
\begin{equation}
(-1)^z \Pf(x, i_1,  \ldots, i_K)\Pf(j_1 \ldots, \hat{x}, \ldots, j_L)
\end{equation}
where $z$ is the number of elements in $J$ preceding $x$, equivalently those elements in $J$ less than $x$.
We write $z=a+b$, where
\begin{align}
a &= \order{\{ y \mid y \in J-I, y < z\}}\\
b &= \order{\{ y \mid y \in J\cap I, y < z\}}\\
\intertext{and we also define}
c &= \order{\{y \mid y\in I-J, y < z\}}.
\end{align}
When we put the indices in $\Pf(x,i_1,\ldots,i_K)$ in increasing order we move $x$ along until it is in the sorted order, we move $x$ exactly $b+c$ times.
Thus
\begin{equation}
\Pf(x,i_1,\ldots i_K) = (-1)^{b+c}\Pf(I\cup\{x\})
\end{equation}
and so it follows that
\begin{equation}
(-1)^z \Pf(x,i_1,\ldots i_K) = (-1)^{a+c} \Pf(I\cup \{x\}).
\end{equation}
It is clear that $a+c$ is precisely the number of those in $D$ preceding $x$, exactly the sign in front of the corresponding term in (\ref{eq:gpi2}).

The argument for the case $x \in J-I$ is symmetric.
\end{proof}

Now we are ready to prove Theorem~\ref{thm:pfafmgi}.
\begin{proof}[Proof of Theorem~\ref{thm:pfafmgi}]
We prove  Theorem~\ref{thm:pfafmgi} by Theorem~\ref{gpi2}.
For a matchgate $G$, let $\alpha,\beta$ be two bitstrings of length $k$, where $k$ is the number of external nodes in $G$.
The $i$-th bit of $\alpha$, denoted $\alpha_i$, corresponds to the $i$-th external node in $G$ in clockwise order.

Let $U$ be the set of all internal (that is, not external) nodes in $G$.
We define $I=\{v_i \mid \alpha_i=0\}\cup U$,
where $v_i$ is the label of the vertex in $G$ which is the $i$-th
external node referenced by $\alpha_i$. Similarly let
$J=\{ v_i \mid \beta_i = 0 \} \cup U$.
Observe that $I \Delta J= \{v_i \mid \alpha_i \neq \beta_i\}$.
It follows that there is a term-for-term correspondence between 
(\ref{eq:gpi2}) of Theorem~\ref{gpi2} and (\ref{eq:pfafmgi}) of Theorem~\ref{thm:pfafmgi}.
\end{proof}

\section{Matchgates Satisfy Matchgate Identities}\label{Matchgates-sat-MGI}
We will now prove that while $\Pf^\alpha$ may differ from $\Gamma^\alpha$ by a sign depending on $\alpha$, the differences occur in just such a pattern that they cancel in the MGI.
This will allow us to conclude that the Pfaffian Signature Identities (\ref{eq:pfafmgi}) differ from the MGI (\ref{eq:mgi}) by a global $\pm 1$ factor, thus 
proving  the Matchgate Identities.

\begin{definition}
For any $M\in\mathcal M(G^\alpha)$, where $G^\alpha$ has the
orientation $\orientgalpha$, we define the \emph{sign} of 
the perfect matching $M$ to be:
\begin{equation}
{\rm sgn}(M) = \frac{\Pf_{\orientgalpha}(M)}{\Gamma_{G^\alpha}(M)}\in\{-1,1\}
\end{equation}
\end{definition}
Recall that  it is a polynomial equality that
the Pfaffian is equal to
 $\pm\PerfMatch$, under a Pfaffian orientation.
Thus we can conclude that, for 
$\Pf_{\orientgalpha}(M)$ and $\Gamma_{G^\alpha}(M)$,
the value of ${\rm sgn}(M)$ is the same $\pm 1$ for every perfect matching 
$M\in\mathcal M(G^\alpha)$.
This allows us to define a very useful function:
\begin{definition}
For any $\alpha$ such that $\mathcal M(G^\alpha)\neq\emptyset$, we take
any $M\in\mathcal M(G^\alpha)$ and define the function $\delta$:
\begin{equation}
\delta(\alpha)=
{\rm sgn}(M) = \frac{\Pf_{\orientgalpha}(M)}{\Gamma_{G^\alpha}(M)}
\end{equation}
\end{definition}
Note that $\delta(\alpha)$ is well-defined;
the value is independent of the choice of $M\in\mathcal M(G^\alpha)$.
It is defined whenever $\mathcal M(G^\alpha)\neq\emptyset$.
Recall that we have a fixed Pfaffian orientation for $G$ and a fixed
induced orientation for all $G^\alpha$.

We are ready to state the key theorem which implies the MGI.
\begin{theorem}\label{thm:delta}
Let $ubvcw \in \{0,1 \}^k$, where
$u\in\{0,1\}^{i-1}$, $b$ refers to the $i$-th bit,
$v\in\{0,1\}^{j-i-1}$, $c$ refers to the $j$-th bit,
and $w\in\{0,1\}^{k-j}$, with $1 \le i < j \le k$.
Let $\talphastring \in \{0,1 \}^k$ be a possibly different bitstring, 
but with $b,c$ still referencing the $i$-th bit and $j$-th bit,
respectively.
Let $\overline{b}=1-b$ and $\overline{c}=1-c$.
Then the following is true:
\begin{equation}\label{eq:deltaequality}
\delta(\alphastring)\delta(\baralphastring) = \delta(\talphastring)\delta(\bartalphastring)
\end{equation}
when all four $\delta$ terms involved are defined.
\end{theorem}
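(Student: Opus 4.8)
The plan is to replace the abstract sign $\delta$ by an explicit combinatorial formula and then prove a purely combinatorial statement about four matchings. Recall that for $M\in\mathcal M(G^\alpha)$ the Pfaffian term equals $(-1)^{\mathrm{cr}(M)}\,\bigl(\prod_{e\in M}s(e)\bigr)\,\Gamma_{G^\alpha}(M)$, where $\mathrm{cr}(M)$ is the number of overlapping (crossing) pairs of edges of $M$ and $s(e)=\pm1$ records whether $e$ is oriented from its smaller-labelled to its larger-labelled endpoint under $\orientg$. Since deleting external nodes never changes the relative order of the surviving vertices, $\mathrm{cr}(M)$ is computed in the one fixed global order of $G$ and therefore depends only on $M$, not on $\alpha$. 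Hence $\delta(\alpha)=\sigma(M):=(-1)^{\mathrm{cr}(M)}\prod_{e\in M}s(e)$ for every $M\in\mathcal M(G^\alpha)$, and $\sigma$ is constant on all matchings sharing a common vertex set (this is exactly the FKT consistency statement). In particular I am free to choose each of the four matchings $M_t\in\mathcal M(G^{\alpha_t})$, $t=1,2,3,4$, independently, and the identity of Theorem~\ref{thm:delta} becomes the statement $\prod_{t=1}^4\sigma(M_t)=1$.

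Next I would form the superposition $N=M_1\oplus M_2\oplus M_3\oplus M_4$ (symmetric difference of edge sets) and run a short degree count. Every internal node lies in all four vertex sets; each of the two nodes indexed by $i,j$ lies in exactly two of them (those carrying the pattern $bc$, or those carrying $\overline{b}\,\overline{c}$); and for a position other than $i,j$ the node lies in $0$ or $4$ of the sets where $u,v,w$ and $\tilde u,\tilde v,\tilde w$ agree, and in exactly two where they differ. Thus every vertex has even degree in $N$, so $N$ decomposes into edge-disjoint cycles, and because $s(e)^2=1$ the orientation part collapses to $\prod_{t}\prod_{e\in M_t}s(e)=\prod_{e\in N}s(e)$.

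The engine is the defining property of a Pfaffian orientation: for any even \emph{nice} cycle $C$ (one whose removal leaves a graph with a perfect matching) the number of edges of $C$ oriented in agreement with a fixed traversal is odd. This is precisely the assertion that re-routing a matching along $C$ leaves $\sigma$ unchanged, i.e. the orientation signs around $C$ exactly compensate the parity of crossings created by the re-routing. I would exploit the freedom in choosing $M_1,\dots,M_4$ to arrange that the cycles of $N$ are nice alternating cycles, and then telescope: each cycle is charged to one application of the Pfaffian-orientation property, and after all cycles are accounted for the four signs multiply to $+1$.

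The main obstacle, and the genuinely new point beyond the classical two-matching case, is the crossing bookkeeping. Because $\mathrm{cr}(\cdot)$ is quadratic, $\sum_t\mathrm{cr}(M_t)$ does not split into contributions internal to the cycles of $N$: there are crossings between edges of different cycles and between cycle edges and edges common to several of the $M_t$. The hard part will be to show that these cross terms cancel modulo $2$; I would do this by charging a pair of edges with a coefficient equal to the number of matchings containing both and arguing that the even-degree structure above forces the relevant coefficients to be even except along a single nice cycle, to which the Pfaffian-orientation property then applies. Making this reduction clean is exactly where planarity and the placement of all external nodes on the outer face are used, since they guarantee that the cycles arising in $N$ can be taken to be removable; carrying out this cancellation carefully is the crux of the full proof.
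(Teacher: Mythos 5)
Your reduction of the theorem to the quadruple-product statement $\prod_{t=1}^4\sigma(M_t)=1$, the formula $\delta(\alpha)=(-1)^{\mathrm{cr}(M)}\prod_{e\in M}s(e)$, and the even-degree count for the superposition $N=M_1\oplus M_2\oplus M_3\oplus M_4$ are all correct. But the proof stops exactly where the theorem lives: you acknowledge that the crossing count is quadratic and that the cross-terms between different cycles of $N$ (and between cycle edges and shared edges) must cancel modulo $2$, and you defer this as bookkeeping "to be carried out carefully." That cancellation is not a routine detail; it is the entire content of the result, and the paper devotes its main technical argument to it. Worse, the engine you propose for the remaining step cannot work as stated: you want to charge each cycle of $N$ to one application of the nice-cycle/Pfaffian-orientation property, but the cycles of $N$ pass through vertices (the nodes $i,j$, and every external position where $uvw$ and $\tilde u\tilde v\tilde w$ differ) that are present in only two of the four graphs $G^{\alpha_t}$. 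Such a cycle is a concatenation of alternating paths from \emph{different} pairs of matchings on \emph{different} vertex sets; it is not an alternating cycle of two perfect matchings of a single graph, so the FKT property gives no per-cycle sign guarantee. Indeed, per-cycle invariance is false: the paper's Fig.~\ref{fig:complexdelta} exhibits $\delta(0000)\delta(1100)=-1$, so a decomposition in which every cycle contributes $+1$ cannot exist in general; only the full quadruple product is invariant. (You would also need to justify that the cycles can be taken to be \emph{nice}, which your freedom in choosing the $M_t$ does not obviously provide.)

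For contrast, the paper sidesteps the multi-vertex-set difficulty by a local surgery rather than a global superposition: after a preprocessing step (hanging a path of length two off each external node and fixing the bridge orientations), it adds a single auxiliary edge $e^*$ joining the two flipped positions, Kasteleyn-orients $e^*$, and observes that $M^{bc}$ and $M^{\bar b\bar c}\cup\{e^*\}$ are then perfect matchings of the \emph{same} Pfaffian-oriented graph $G^*$, where the FKT same-sign fact applies legitimately. The two correction factors that remain --- the change in crossing parity caused by the pair $\{i,j\}$, and the dependence of the Kasteleyn orientation of $e^*$ on the removal pattern --- are each computed explicitly to be $(-1)^z$, where $z$ is the number of surviving external nodes strictly between $i$ and $j$, and they cancel, leaving a pair product $\delta(ubvcw)\,\delta(u\bar b v\bar c w)=f(\orientg,i,j)$ depending only on $i$, $j$ and the fixed orientation. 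If you want to salvage your approach, you should pair the four matchings into two pairs sharing the same flipped positions and perform this kind of augmentation for each pair; attempting the four-way symmetric difference head-on leaves you facing precisely the sign-change-under-vertex-removal phenomenon that the theorem is designed to tame.
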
 
Note that the only equality we claim here is the pair-wise product being
the same. 
The individual $\delta$ terms can vary; 
for example
there are cases when the above equation resolves to $(1)(-1)=(-1)(1)$ (see Fig.\ \ref{fig:complexdelta}).
The theorem asserts that if flipping two fixed bits changes the ``sign
change'' $\delta$
for some $u,v,w$, then it will 
change the sign change for all $u,v,w$ of the same lengths
 whenever $\delta$ is defined.
It is an invariance of {\it the change of sign change}. 

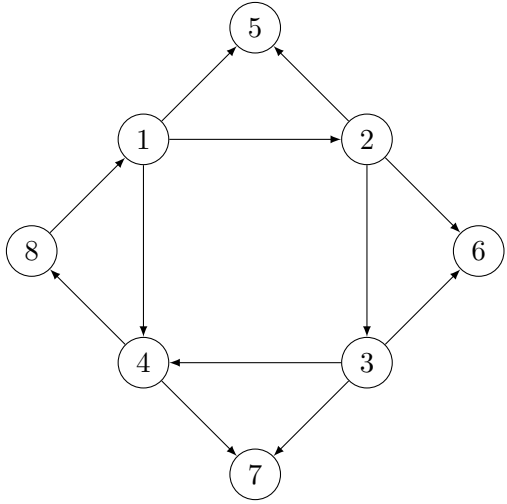
\begin{figure}
\centering
\begin{tikzpicture}
[every path/.style={>=latex},vertex/.style={circle,draw},input/.style={circle,draw}]
\node[vertex](1){1};
\node[vertex](5)[above right=of 1]{5};
\node[vertex](2)[below right=of 5]{2};
\node[vertex](6)[below right=of 2]{6};
\node[vertex](3)[below left=of 6]{3};
\node[vertex](7)[below left=of 3]{7};
\node[vertex](4)[above left=of 7]{4};
\node[vertex](8)[above left=of 4]{8};

\foreach \p/\q in {1/2,2/3,3/4,1/4,  1/5,2/5,   2/6,3/6,   3/7,4/7,    4/8,8/1} {
\draw[->] (\p) to (\q);
}
\end{tikzpicture}
\caption{An example of a nontrivial instance of equation (\ref{eq:deltaequality}).
Let the external nodes be $5,6,7,8$.
Observe that $\delta(0000)=1,\delta(1100)=-1,\delta(0011)=-1,\delta(1111)=1$.
Thus, if we let $b,c$ refer to the first two external nodes, $u,v$ both be the empty string, and $w=00$ and $\tilde{w}=11$, we get the situation where equation (\ref{eq:deltaequality}) becomes  $(1)(-1)=(-1)(1)$.}\label{fig:complexdelta}
\end{figure}

Since each factor in  (\ref{eq:deltaequality})
is $\pm 1$, this equation can also be equivalently 
expressed as the following {\it quadruple product identity}:
\begin{equation}\label{eq:quadruple-product-deltaequality}
\delta(\alphastring)\delta(\baralphastring)  \delta(\talphastring)\delta(\bartalphastring) =1
\end{equation}

\paragraph{Theorem~\ref{thm:delta} implies the MGI}
Before proving Theorem~\ref{thm:delta} we show how it proves Theorem~\ref{thm:mgi}.

If there are no non-zero terms in a particular MGI indexed by $\alpha,
\beta \in \{0, 1\}^k$, then the MGI is trivial.

There is a 1-1 correspondence between the non-zero terms in
(\ref{eq:mgi}) and (\ref{eq:pfafmgi}).
Since (\ref{eq:pfafmgi}) is an equality, if there are non-zero terms
in (\ref{eq:mgi}), then there are at least two such terms.
Consider all non-zero terms in  (\ref{eq:mgi}),
and let each  non-zero term from the Pfaffian identity 
(\ref{eq:pfafmgi}) be divided by its corresponding  MGI term in (\ref{eq:mgi}).
The ratio is of the form
\begin{equation}
\frac{\Pf^{\alpha\oplus e_i}\Pf^{\beta\oplus e_i}}
{\Gamma^{\alpha\oplus e_i}\Gamma^{\beta\oplus e_i}} 
=
\delta(\alpha\oplus e_i)\delta(\beta\oplus e_i),
\end{equation}
where $i$ is a bit location where $\alpha_i \not = \beta_i$.
Consider any two such terms and form the \emph{product} of the two products
 of the pairs. 
This quadruple product has the form
\begin{equation}\label{4-way-product}
\delta(\alpha\oplus e_i)\delta(\beta\oplus e_i)
\delta(\alpha\oplus e_j)\delta(\beta\oplus e_j)
\end{equation}
for some $1 \le i < j \le k$, which is the same as 
\begin{equation}\label{reshuffle}
\delta(\alpha\oplus e_i)\delta(\alpha\oplus e_j)
\delta(\beta\oplus e_j)\delta(\beta\oplus e_i).
\end{equation}
Let $\alpha_\ell$ be the $\ell$-th bit in $\alpha$ ($1 \le \ell \le k$).
Recall that $\alpha_i = \overline{\beta_i},\alpha_j = \overline{\beta_j}$.
Letting $b=\overline{\alpha_i}$ and $c={\alpha_j}$, we see that we can use Theorem~\ref{thm:delta} to conclude that
 the product of the first two terms equals the product of the other two terms in (\ref{reshuffle}), 
and so the whole product must be $1$.
This implies that all Pfaffian identity terms differ from their corresponding MGI terms by the same global $\pm1$ constant.
Note that $\delta$ is not defined exactly when that term in the MGI is 0
(and the corresponding Pfaffian Signature Identity term is also 0), so it is sufficient
to consider only those terms in the MGI where the relevant
$\delta$ is defined.

Theorem~\ref{thm:mgi} is proved assuming Theorem~\ref{thm:delta}.

\vspace{.2in}
Now we will prove Theorem~\ref{thm:delta}.
We first prove for the case  $b=c=0$.
The proof for the case $b=1,c=0$ is similar with only a few extra complications.
The other cases follow by symmetry.

\paragraph{Preprocessing} We assume that $G$ is preprocessed in the following 
way:
First we  append a path of length 2 from each external node in $G$.
For the $i$-th external node, we will connect it to a 
new node called $\hat{i}$, 
which is then connected to another new node called $i$.
The new nodes $1, 2, \ldots, k$ are now considered external nodes,
and are labeled as such within the graph. 
All other nodes, including all original nodes and all 
$\hat{1}, \hat{2}, \ldots, \hat{k}$ are non-external nodes.
$\hat{i}$ will be given the label $2k +1 -i$. Thus 
$\hat{1}, \hat{2}, \ldots, \hat{k}$ are ordered 
reversely $2k > 2k-1 > \ldots > k+1$ respectively.  
All other nodes (the original nodes of $G$)
are labeled arbitrarily starting  from $2k+1$.
The modified graph will now be called $G$.
Now all external nodes are at the end of a path of length at least $2$.
It is easy to check that the signature $\Gamma$ is not changed.
As an example of the preprocessing for $k=5$, consider Fig.\ \ref{fig:preprocessing}.

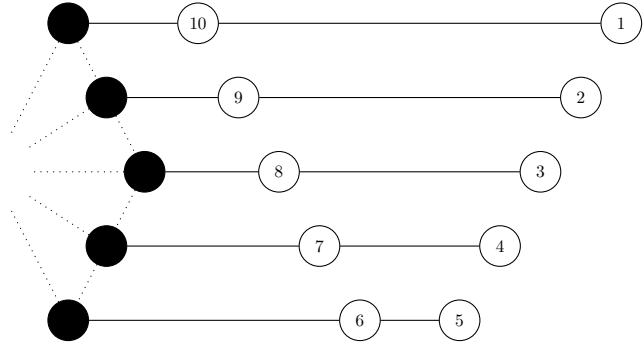
\begin{figure}
\centering
\begin{tikzpicture}
[vertex/.style={circle,draw,scale=1,minimum size=.9cm},input/.style={circle,draw},scale=0.6,transform shape]
\node[circle](scaffold1){};
\node[circle](scaffold2)[right=2.5cm of scaffold1]{};
\node[circle](scaffold3)[right=9cm of scaffold2]{};

\node[vertex,black,fill](1)[below=0cm of scaffold1]{};
\node[vertex,black,fill](2)[below right=1cm and 0.2cm of 1]{};
\node[vertex,black,fill](3)[below right=1cm and 0.2cm of 2]{};
\node[vertex,black,fill](4)[below left= 1cm and 0.2cm of 3]{};
\node[vertex,black,fill](5)[below left= 1cm and 0.2cm of 4]{};
\node[circle,scale=5](core)[left=2cm of 3]{};

\node[vertex](a1)[below =0cm  of scaffold2]{$10$};
\node[vertex](a2)[below right=1cm and 0.25cm of a1]{$9$};
\node[vertex](a3)[below right=1cm and 0.25cm of a2]{$8$};
\node[vertex](a4)[below right=1cm and 0.25cm of a3]{$7$};
\node[vertex](a5)[below right=1cm and 0.25cm of a4]{$6$};

\node[vertex](b1)[below=0cm of scaffold3]{$1$};
\node[vertex](b2)[below left=1cm and 0.25cm of b1]{$2$};
\node[vertex](b3)[below left=1cm and 0.25cm of b2]{$3$};
\node[vertex](b4)[below left=1cm and 0.25cm of b3]{$4$};
\node[vertex](b5)[below left=1cm and 0.25cm of b4]{$5$};

\foreach \p in {1,2,3,4,5} {
\draw [-] (\p) to (a\p);
}
\foreach \p in {1,2,3,4,5} {
\draw [-] (a\p) to (b\p);
}
\foreach \p/\q in {1/2,2/3,3/4,4/5} {
\draw [dotted] (\p) to (\q);
}
\foreach \p in {1,2,3,4,5} {
\draw [dotted] (\p) to (core);
}
\end{tikzpicture}
\caption{An example of preprocessing.}\label{fig:preprocessing}
\end{figure}

Second,  we make $G$ a   connected graph.
If the graph is already connected then we do nothing.
Suppose it is not connected and there are  several connected components $G_i$.
Consider a clockwise traversal of all the external nodes.
We may consider the planar embedding is on the sphere
with one fixed point in the outer face designated as $\infty$.
We temporarily connect each external node to $\infty$ by non-intersecting paths.
As we clockwise-traverse from one external node to the next, if they belong to different components $G_i$ and $G_j$, we can connect one non-external node $u$ from $G_i$ to one non-external node $v$ from $G_j$ by a path of length 2: $u, e=\{u,w\},w,e'=\{w,v\},v$, together with one extra node $w'$ and an edge $\{w,w'\}$.
This gadget can be made disjoint from all the temporary paths to $\infty$,
and also disjoint from each other.
All new edges (there are three edges on each such gadget) have weight 1.
In any perfect matching, $w$ is matched to $w'$ and 
therefore this gadget has no effect on the signature.
Then we remove the temporary paths to $\infty$.
The only purpose is to make the matchgate graph
(1) connected, and (2) its outer face uniquely well-defined 
for the given planar embedding, with a connected boundary.

Lastly, we concern ourselves with the orientation $\orientg$ for $G$.
The $k$ external nodes are labeled clockwise 
 $1$ through $k$ exactly in that order.
When we index $G$ with a length-$k$ bitstring $\alpha$, 
 the bits in $\alpha$ refer to the external nodes in $G$ 
in this clockwise order.
The neighbor $\hat{i}$ of $i$ is labeled $2k+1 -i$.
Then we let $\orientg$ be any Pfaffian orientation of $G$.
Note that the orientation of \emph{bridge} edges (edges 
that are not part of any cycle) have no bearing on
the orientation being a Pfaffian orientation, and therefore can be
arbitrary.
In particular, for each $\{i,\hat{i}\}$ edge 
(being a bridge edge) we assume it is oriented in the order $(i, \hat{i})$:
 from low to high.

With our graph so preprocessed, we are ready to prove our theorem.
For every bit values $b$ and $c$, and strings $u, v, w$,
 for brevity we will use $G^{bc}$ to refer to the graph $G^{ubvcw}$,
suppressing $u, v, w$.
\begin{proof}[Proof for the case $b=0,c=0$]
We assume that $\delta(u0v0w)$ and $\delta(u1v1w)$ are both defined
(for the particular $u, v, w$).

Hence there exists a perfect matching in $G^{00}$, call it $M^{00}$.
Similarly there exists a perfect matching in $G^{11}$, call it $M^{11}$.
Let $e^*=\{i,j\}$ be a new edge (recall that $i<j$ are the external nodes 
referenced by $b,c$).
This is an undirected edge placed in the outer face.
Define the graph $G^*=G^{00}\cup\{e^*\}$, having the same set of vertices as $G^{00}$
and one extra edge $e^*$. See Fig. \ref{fig:estarcase1}.
This introduces a new non-outer face, consisting of the segment from external nodes $i$ to $j$ (corresponding to $b v c$) followed by $e^*$.
The segment has a path from $i$ to $j$  through all the external
nodes $\ell$ or its neighbor  $\hat{\ell}$ referenced in $v$ because
the boundary of the outer face is connected.
Viewed from within the new non-outer face just formed by this path and $e^*$,
the segment $bvc$  is traversed in counter-clockwise direction.

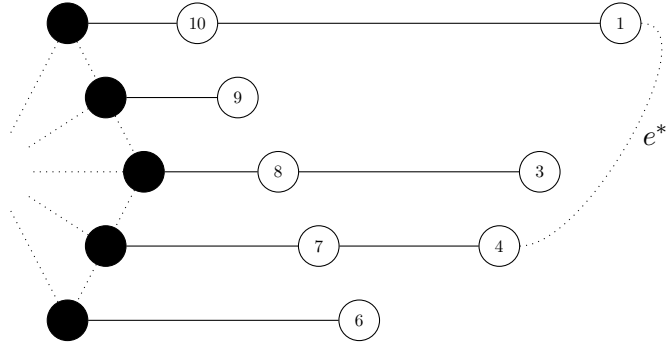
\begin{figure}
\centering
\begin{tikzpicture}
[vertex/.style={circle,draw,scale=1,minimum size=.9cm},input/.style={circle,draw},scale=0.6,transform shape,invis/.style={circle,scale=1,minimum size=.9cm}]
\node[circle](scaffold1){};
\node[circle](scaffold2)[right=2.5cm of scaffold1]{};
\node[circle](scaffold3)[right=9cm of scaffold2]{};

\node[vertex,black,fill](1)[below=0cm of scaffold1]{};
\node[vertex,black,fill](2)[below right=1cm and 0.2cm of 1]{};
\node[vertex,black,fill](3)[below right=1cm and 0.2cm of 2]{};
\node[vertex,black,fill](4)[below left= 1cm and 0.2cm of 3]{};
\node[vertex,black,fill](5)[below left= 1cm and 0.2cm of 4]{};
\node[circle,scale=5](core)[left=2cm of 3]{};

\node[vertex](a1)[below=0cm of scaffold2]{$10$};
\node[vertex](a2)[below right=1cm and 0.25cm of a1]{$9$};
\node[vertex](a3)[below right=1cm and 0.25cm of a2]{$8$};
\node[vertex](a4)[below right=1cm and 0.25cm of a3]{$7$};
\node[vertex](a5)[below right=1cm and 0.25cm of a4]{$6$};

\node[vertex](b1)[below=0cm of scaffold3]{$1$};
\node[invis](b2)[below left=1cm and 0.25cm of b1]{};
\node[vertex](b3)[below left=1cm and 0.25cm of b2]{$3$};
\node[vertex](b4)[below left=1cm and 0.25cm of b3]{$4$};
\node[invis](b5)[below left=1cm and 0.25cm of b4]{};

\foreach \p in {1,2,3,4,5} {
\draw [-] (\p) to (a\p);
}
\foreach \p in {1,3,4} {
\draw [-] (a\p) to (b\p);
}
\foreach \p/\q in {1/2,2/3,3/4,4/5} {
\draw [dotted] (\p) to (\q);
}
\foreach \p in {1,2,3,4,5} {
\draw [dotted] (\p) to (core);
}
\draw (b1) edge[dotted,out=0, in=0, looseness=0.7] node[scale=1.66,right] {$e^*$} (b4);
\end{tikzpicture}
\caption{Adding $e^*$ in the $b=0,c=0$ case. In this example $i=1$, $j=4$, and $bvc=0100$.}\label{fig:estarcase1}
\end{figure}

By adding $e^*$ we have exactly one more face in $G^*$ compared to $G$,
as well as compared to  $G^{00}$ and $G^{11}$.
Let $\orientgalpha[*]=\orientgalpha[00] \cup \{\overrightarrow{e^*}\}$
 with $\overrightarrow{e^*}$ oriented
either as $(i,j)$ or as $(j,i)$,
  such that that new face has
an odd number of  clockwise oriented edges, 
as demanded by Kasteleyn's algorithm to produce
a Pfaffian orientation.
We note that each existing bridge edge $\{\ell, \hat{\ell}\}$ 
corresponding to a bit 0 in $v$
contributes exactly one extra clockwise oriented edge in the traversal
around the boundary of the new face, since  it is traversed in both
directions exactly once.
We define $M^*=M^{11}\cup \{e^*\}$.
Note that $M^*\in\mathcal M(G^*)$ and
we may also  consider $M^{00}\in\mathcal M(G^*)$.

We shall use $M^*$ as an intermediate step to understand 
how $\delta(u0v0w)$ and $\delta(u1v1w)$ are related.
Our goal is to show 
that 
their product is a function entirely of $i,j$ and $\orientg$, 
and independent of $u$, $v$ and $w$,
thus proving our theorem.

\noindent
{\bf Claim:}
The signs of $M^*$ and $M^{00}$ are the same.
Formally:
\begin{equation}
\frac{\Pf_{\orientgalpha[00]}(M^{00})}{\Gamma_{G^{00}}(M^{00})}
=
\frac{\Pf_{\orientgalpha[*]}(M^{00})}{\Gamma_{G^{*}}(M^{00})}
=
\frac{\Pf_{\orientgalpha[*]}(M^{*})}{\Gamma_{G^{*}}(M^{*})}
\end{equation}
The first equality follows from the fact that adding the edge $e^*$ to $G^{00}$ 
does not change the Pfaffian term nor the perfect matching term,
both corresponding to the perfect matching
$M^{00}$, since $M^{00}$ does not contain the edge $e^*$.
The second equality follows from
 the fact that \emph{all} perfect matchings in a 
Pfaffian-oriented graph must have the same sign.
terms in a Pfaffian-oriented graph must have the same sign.

Now we compare $M^*$ and $M^{11}$.
The perfect matching terms are the same, $\Gamma_{G^{*}}(M^{*})=\Gamma_{G^{11}}(M^{11})$,
since the additional edge $e^*$ has weight 1.
We write out their Pfaffian terms explicitly.
For $\Pf_{\orientgalpha[11]}(M^{11})$ we will write it in the canonical
form where the listing of matched edges are given according to the stipulation
after (\ref{Pfaffian-defined}).  Note that the nodes of $G^{11}$ are
linearly ordered by the induced order from that of $G$.
For $\Pf_{\orientgalpha[*]}(M^{*})$ we will write it
by appending the extra matched pair $\{i, j\}$ in the order $i<j$ at the end.
\begin{align}
\Pf_{\orientgalpha[11]}(M^{11}) &=\epsilon_{\pi_1}A_{x_1,x_2}A_{x_3,x_4}\ldots A_{x_{n-1},x_n}\\
\Pf_{\orientgalpha[*]}(M^{*})   &=\epsilon_{\pi_2}A_{x_1,x_2}A_{x_3,x_4}\ldots A_{x_{n-1},x_n}A_{i,j}
\end{align}
where 
$x_1 < x_2, x_3 < x_4, \ldots, x_{n-1} < x_n$,
$x_1 < x_3 < \ldots < x_{n-1}$. 
$A_{i,j}=\pm1$, and it is $+1$ if 
$\overrightarrow{e^*}$ is oriented as $(i,j)$  and it is $-1$ if
it is oriented as $(j,i)$, according to Kasteleyn's algorithm.
The sign of the permutation
$\epsilon_{\pi_1}$  counts the parity of the overlapping pairs
among $\{\{x_1,x_2\}, \{x_3,x_4\}, \ldots, \{x_{n-1},x_n\} \}$. 
The sign $\epsilon_{\pi_2}$
 counts the parity of the overlapping pairs
among $\{ \{x_1,x_2\}, \{x_3,x_4\}, \ldots, \{x_{n-1},x_n\}, \{i, j\} \}$.
Thus $\epsilon_{\pi_2}/\epsilon_{\pi_1} = (-1)^z$, 
where $z$ is the number of overlaps
between $\{i, j\}$ and the edges in $M^{11}$.

We account for these two sources of change in values separately.

Consider $\epsilon_{\pi_2}/\epsilon_{\pi_1}$.
To form an overlapping pair
  with $\{i, j\}$,  a pair must be an edge with one label
between $i$ and $j$ and one label outside. Vertices with a label
between $i$ and $j$ correspond exactly  to 
the external nodes within the segment $v$ that are not removed.
These external nodes must be matched within $M^{11}$ to a node of
a label greater than $j$.
It follows that $z$ is precisely the  number of 0's within $v$.

Now consider $A_{i,j}$.
It is $-1$ if $e^*$ is oriented high-to-low in $\orientgalpha[*]$, and $1$ otherwise.
Let $f(\orientg,i,j)$ be this value
when the orientation of $e^*$ is made to the graph $G^{11\ldots1\oplus e_i\oplus e_j}$---the
graph obtained from $G$ with all
external nodes removed except
 $i$ and $j$---according to Kasteleyn's algorithm.  
Relative to this,
if the orientation of $e^*$ is made to the graph $G^*$,
the orientation is changed according to the parity of the number of zeros in $v$,
the removal pattern within the segment between $i$ and $j$.
More precisely, each $0$ within $v$ adds one more bridge edge
of the form $\{\ell, \hat{\ell} \}$ where $i<\ell<j$ and changes the orientation of $e^*$
exactly once. Hence the value $A_{i,j}$
is precisely $f(\orientg,i,j)(-1)^{z}$,
where, again $z$ is the  number of 0's within $v$.


Returning to our Pfaffian terms:
\begin{align}
\Pf_{\orientgalpha[11]}(M^{11}) &=\epsilon_{\pi_1}A_{x_1,x_2}A_{x_3,x_4}\ldots A_{x_{n-1},x_n}\\
\Pf_{\orientgalpha[*]}(M^{*})   &=\epsilon_{\pi_2}A_{x_1,x_2}A_{x_3,x_4}\ldots A_{x_{n-1},x_n}A_{i,j}\\
                                &=\epsilon_{\pi_1}A_{x_1,x_2}A_{x_3,x_4}\ldots A_{x_{n-1},x_n}f(\orientg,i,j).
\end{align}
Note that the two factors $(-1)^z$ are canceled.
So the sign 
difference between $\delta(u0v0w)$ and $\delta(u1v1w)$ is entirely a function of $\orientg$ and $i,j$, and not the constituent $u,v,w$.
\end{proof}

\begin{proof}[Proof for the case $b=1,c=0$]
Following the idea of $M^{00}$ and $M^{11}$ from the previous proof, we define $M^{01}$ and $M^{10}$ in the graphs $G^{01}, G^{10}$ respectively.
Recall that the neighbor of $i$ is $\hat{i}$, and the neighbor of $j$ is $\hat{j}$, $\{i,\hat{i}\} \in M^{01}$ and $\{j, \hat{j}\} \in M^{10}$,
and that they are specially labeled such that $i<j<\hat{j}<\hat{i}$.
We define a new edge  $e^*=\{j,\hat{i}\}$, and
$G^* = G^{10}\cup \{e^*\}$.
See Fig.\ \ref{fig:estarcase2}.
The edge $e^*$ is drawn on the outer face of $G^{10}$ so that
$G^*$ is a plane graph with one more non-outer face.
We orient $G^*$ to $\orientgalpha[*]$,
namely to orient the edge $e^*$ appropriately 
as before by Kasteleyn's algorithm.
Let $M^*=(M^{01} - \{i,\hat{i}\}) \cup \{e^*\}$.
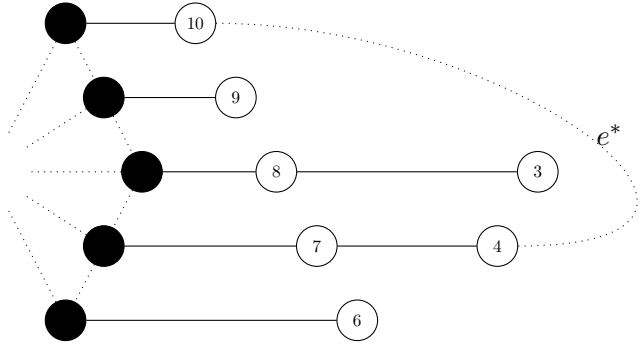
\begin{figure}
\centering
\begin{tikzpicture}
[vertex/.style={circle,draw,scale=1,minimum size=.9cm},input/.style={circle,draw},scale=0.6,transform shape,invis/.style={circle,scale=1,minimum size=.9cm}]
\node[circle](scaffold1){};
\node[circle](scaffold2)[right=2.5cm of scaffold1]{};
\node[circle](scaffold3)[right=9cm of scaffold2]{};

\node[vertex,black,fill](1)[below=0cm of scaffold1]{};
\node[vertex,black,fill](2)[below right=1cm and 0.2cm of 1]{};
\node[vertex,black,fill](3)[below right=1cm and 0.2cm of 2]{};
\node[vertex,black,fill](4)[below left= 1cm and 0.2cm of 3]{};
\node[vertex,black,fill](5)[below left= 1cm and 0.2cm of 4]{};
\node[circle,scale=5](core)[left=2cm of 3]{};

\node[vertex](a1)[below=0cm of scaffold2]{$10$};
\node[vertex](a2)[below right=1cm and 0.25cm of a1]{$9$};
\node[vertex](a3)[below right=1cm and 0.25cm of a2]{$8$};
\node[vertex](a4)[below right=1cm and 0.25cm of a3]{$7$};
\node[vertex](a5)[below right=1cm and 0.25cm of a4]{$6$};

\node[invis](b1)[below=0cm of scaffold3]{};
\node[invis](b2)[below left=1cm and 0.25cm of b1]{};
\node[vertex](b3)[below left=1cm and 0.25cm of b2]{$3$};
\node[vertex](b4)[below left=1cm and 0.25cm of b3]{$4$};
\node[invis](b5)[below left=1cm and 0.25cm of b4]{};

\foreach \p in {1,2,3,4,5} {
\draw [-] (\p) to (a\p);
}
\foreach \p in {3,4} {
\draw [-] (a\p) to (b\p);
}
\foreach \p/\q in {1/2,2/3,3/4,4/5} {
\draw [dotted] (\p) to (\q);
}
\foreach \p in {1,2,3,4,5} {
\draw [dotted] (\p) to (core);
}
\draw (a1) edge[dotted,out=0, in=0, looseness=2] node[scale=1.66,right] {$e^*$} (b4);
\end{tikzpicture}
\caption{Adding $e^*$ in the $b=1,c=0$ case. In this example $\hat{i}=10$, $j=4$, and $bvc=1100$.}\label{fig:estarcase2}
\end{figure}

We claim, using the same
reasoning from the previous proof, that $M^*$ and $M^{10}$ have the same sign.

\begin{equation}
\frac{\Pf_{\orientgalpha[10]}(M^{10})}{\Gamma_{G^{10}}(M^{10})}
=
\frac{\Pf_{\orientgalpha[*]}(M^{10})}{\Gamma_{G^{*}}(M^{10})}
=
\frac{\Pf_{\orientgalpha[*]}(M^{*})}{\Gamma_{G^{*}}(M^{*})}
\end{equation}
The first equality is because $M^{10}$ does not contain the edge $e^*$ 
which was added to $G^{10}$ to obtain $G^*$.
The second equality is because $M^{10}$ and $M^{*}$ are both
perfect matchings in a Pfaffian oriented graph $\orientgalpha[*]$.

Now we only need to compare $M^*$ and $M^{01}$.
The perfect matching terms are the same, $\Gamma_{G^{*}}(M^{*})=\Gamma_{G^{01}}(M^{01})$,
since both edges $e^*$ and $\{i,\hat{i}\}$ have weight 1.
We shall use the same approach as before to analyze the Pfaffian terms.
Once again we write their Pfaffian terms explicitly:
\begin{align}
\Pf_{\orientgalpha[01]}(M^{01}) &=\epsilon_{\pi_1}A_{x_1,x_2}A_{x_3,x_4}\ldots A_{x_{n-1},x_n}A_{i,\hat{i}}\\
\Pf_{\orientgalpha[*]}(M^{*})   &=\epsilon_{\pi_2}A_{x_1,x_2}A_{x_3,x_4}\ldots A_{x_{n-1},x_n}A_{j,\hat{i}}
\end{align}
where the labels of the matching edges satisfy
$x_1 < x_2, x_3 < x_4, \ldots, x_{n-1} < x_n$,
$x_1 < x_3 < \ldots < x_{n-1}$,
and $\epsilon_{\pi_1}$ and $\epsilon_{\pi_2}$ count the parity of
the number of overlapping pairs
 among the matching edges in $M^{01}$ and  $M^*$ respectively.
To compute $\epsilon_{\pi_2}/\epsilon_{\pi_1}$, we only need to account
for the parities of
the number of overlapping pairs between $\{i,\hat{i}\}$ and 
the other matching edges in $M^{01}$,
and between  $\{j, \hat{i}\}$ and the other matching edges in $M^{*}$.
As a necessary condition, any such an overlapping edge must have at least one end point strictly less than $\hat{i}$.
Let us account for all edges $\{x, y\}$ with the minimum label $\min\{x, y\}
< \hat{i}$. Those with $\min\{x, y\} <i$ are not overlapping edges since
each has a unique neighbor with label greater than $\hat{i}$.
The unique edge with $\min\{x, y\} =i$ is $\{i,\hat{i}\}$ in $G^{01}$, which
is not present in $G^*$.
The edges with $i < \min\{x, y\}<j$ are of the form $\{\ell, \hat{\ell}\}$.
They are in 1-1 correspondence
with the 0's in $v$, 
and do contribute  an overlapping pair  in   $M^*$ but {\it not} in $M^{01}$.
The vertex $j$ is not present in $G^{01}$, and the edge $e^* =
\{j, \hat{i}\}$ has $\min\{x, y\} =j$, and 
is in $M^*$.
All edges with $j < \min\{x, y\} < \hat{i}$ either do not contribute
to an overlap
 in both  $M^{01}$ and  $M^*$ (when $j < \min\{x, y\}  \le k$),
or  do contribute
to an overlap in both  $M^{01}$ and  $M^*$ (when $k < \min\{x, y\} < \hat{i}$).
The conclusion is that
$\epsilon_{\pi_2}/\epsilon_{\pi_1} = (-1)^z$,
where $z$ is the number of 0's in $v$.

%

Now consider $A_{i,\hat{i}}$ and $A_{j,\hat{i}}$.
Because we oriented the bridge edge $\{i,\hat{i}\}$ from low to high, we know
that  $A_{i,\hat{i}}$ is  $1$.
We now need only consider $A_{j,\hat{i}}$.
By the  same reasoning  to the previous proof, we conclude
\begin{equation}
A_{j,\hat{i}}=f(\orientg,i,j)(-1)^{z},
\end{equation}
where $f(\orientg,i,{j})$ is the $\pm 1$ value for $A_{j,\hat{i}}$
when we introduce the edge $\{j, \hat{i}\}$ to the graph
obtained from $G$ with all
external nodes removed {\it except} $j$, according to Kasteleyn's algorithm.



Our conclusion is the same:
\begin{align}
\Pf_{\orientgalpha[01]}(M^{01}) &=\epsilon_{\pi_1}A_{x_1,x_2}A_{x_3,x_4}\ldots A_{x_{n-1},x_n}A_{i,\hat{i}}\\
                                &=\epsilon_{\pi_1}A_{x_1,x_2}A_{x_3,x_4}\ldots A_{x_{n-1},x_n}\\
\Pf_{\orientgalpha[*]}(M^{*})   &=\epsilon_{\pi_2}A_{x_1,x_2}A_{x_3,x_4}\ldots A_{x_{n-1},x_n}A_{j,\hat{i}}\\
                                &=\epsilon_{\pi_1}A_{x_1,x_2}A_{x_3,x_4}\ldots A_{x_{n-1},x_n}f(\orientg,i,j),
\end{align}
again with the two factors $(-1)^z$ canceled.
The second line follows from the fact that $A_{i,\hat{i}} = 1$.
Again we conclude that the difference in sign between $ubvcw$ and $u\overline{b}v\overline{c}w$ is entirely a function of $\orientg$ and
 $i,j$, and not of $u,v,w$.
\end{proof}
With this, the proof of Theorem~\ref{thm:mgi} is complete, namely
(planar) matchgate signatures satisfy the Matchgate Identities.



\section{MGI Imply Matchgate-Realizable}\label{sec:realizing}
Any signature of a 
matchgate must satisfy the Matchgate Identities.
In this section,
 we show that any $\Gamma\in(\mathbb C^2)^{\tensor k}
= \mathbb{C}^{2^k}$ satisfying the Matchgate Identities can be realized
as the signature of a 
matchgate with $k$ external nodes.
Thus MGI are not only necessary but also sufficient conditions
for matchgate signatures.

Consider a length $2^k$ vector $\Gamma$ indexed by $\{0,1\}^k$ satisfying MGI.
If it is the all-zeros vector then it is trivially realizable.
So assume there is at least one non-zero value.

\paragraph{Preprocessing}
Assume $\Gamma^\beta\neq0$, for some $\beta \in \{0,1\}^k$.
Define $\Gamma'^{\alpha}=\Gamma^{\alpha\oplus\overline{\beta}}/\Gamma^\beta$,
where $\overline{\beta} = \beta \oplus 11\ldots1$.
Thus, $\Gamma'^{11\ldots1}=1$, and $\Gamma'$ also satisfies MGI.
In this section we will create a matchgate $G'$ with signature $\Gamma'$.
Given such a $G'$, we can create a matchgate $G$
with signature $\Gamma$ as follows: 
First we add two new non-external nodes $u,v$ to $G'$ and an edge $\{u,v\}$ of weight $\Gamma^\beta$.
Those two nodes are not connected to any other nodes---in effect they contribute exactly a factor $\Gamma^\beta$ to each
perfect matching term.
Then, if the $i$-th bit of $\overline{\beta}$ is one, we 
add a new edge $\{v_i, v'_i\}$ of weight one to the $i$-th external node $v_i$,
and making $v'_i$ the new $i$-th external node.
It follows that the signature of $G$ is exactly $\Gamma$.

\paragraph{Construction}
We now show that we can realize $\Gamma$ satisfying MGI
and  $\Gamma^{11\ldots1}=1$.
Let $K_k$ denote the complete graph on $k$ vertices.
The labels of $K_k$ are ordered $1 < 2 < \ldots < k$,
and correspond to the bit positions in the
index for $\Gamma$.
We place the nodes of $K_k$ on a convex curve,
as illustrated in Fig.\ \ref{fig:k5}.
The nodes are arranged in clockwise order by their index, 
and two edges cross each other geometrically in the drawing of the graph
iff their labels form an overlapping pair  as defined before algebraically.
(We assume the $k$ nodes are placed in general position, so that
any pair of crossing edges intersect at a unique point. There are exactly
$\binom{k}{4}$ such intersection points.)
For each $\alpha$ of Hamming weight $k-2$, note that $K_k^\alpha$ has exactly one edge left.
For each such $\alpha$, set the weight of the unique edge in $K_k^\alpha$ 
to be $\Gamma^\alpha$.
This defines a weight for every edge of $K_k$.

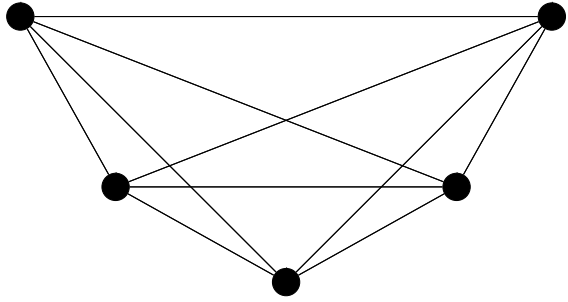
\begin{figure}
\centering
\begin{tikzpicture}
[vertex/.style={circle,draw,fill=black},input/.style={circle,draw}]
\node[vertex](1){};
\node[vertex](2)[below right=2cm and 1cm of 1]{};
\node[vertex](3)[below right=1cm and 2cm of 2]{};
\node[vertex](4)[above right=1cm and 2cm of 3]{};
\node[vertex](5)[above right=2cm and 1cm of 4]{};

\foreach \p in {1,2,3,4,5} {
\foreach \q in {1,2,3,4,5} {
\draw [-] (\p) to (\q);
}
}
\end{tikzpicture}
\caption{The embedding for $K_5$.}\label{fig:k5}
\end{figure}

\paragraph{Equality with Pfaffian}
Clearly our embedding of $K_k$ is not planar for a general $k \ge 4$.
We first prove the following equality:
Let $\Pf(K_k^{\alpha})$ be the Pfaffian value of the skew-symmetric matrix
representing $K_k^{\alpha}$ where the nodes of $K_k^{\alpha}$
have the induced order from $1 < 2 < \ldots < k$.
Then 
for all $\alpha \in \{0, 1\}^k$:
\begin{equation}\label{k-choose-2-determine-Gamma}
\Pf(K_k^{\alpha}) = \Gamma^\alpha.
\end{equation}
It follows that the $\binom{k}{2}$ edge weights of $K_k$ determine
the $2^k$ values of any $\Gamma$ satisfying MGI.

Clearly (\ref{k-choose-2-determine-Gamma})
holds for any $\alpha$ of Hamming weight greater than or equal to $k-2$.
By assumption $\Gamma$ satisfies the Matchgate Identities (\ref{eq:mgi}).
Inductively, consider $\alpha$ with Hamming weight $k-l$ for some even $l>2$.
Let $\{p_1, \ldots, p_l\}$  be the set of indices 
listed in increasing order $p_1 < \ldots < p_l$, where $\alpha$ has the bit 0.
 These are the bit positions where  $\alpha$ differs from $1^k$.
Consider the MGI on $\alpha\oplus e_{p_1}$ and $1^k\oplus e_{p_1}$:
\begin{equation}\label{eq:mgiapplied}
\Gamma^\alpha\Gamma^{11\ldots1}=\sum_{i=2}^l(-1)^i\Gamma^{\alpha\oplus e_{p_1}\oplus e_{p_i}}\Gamma^{1^k\oplus e_{p_1} \oplus e_{p_i}}
\end{equation}
As $\Gamma^{11\ldots1}=1$, we see that $\Gamma^{\alpha}$ is defined by higher Hamming weight terms.

Thus all lower Hamming weight terms
of $\Gamma$ are determined by those of weight $k-2$.
However we observe that, by (\ref{eq:gpi2}),
the  Pfaffian values also satisfy
exactly the same identities   as  MGI.
By induction, it follows  that $\Pf(K_k^{\alpha})=\Gamma^\alpha$ 
for all $\alpha$.

\paragraph{Planarizing $K_k$}
We want to show next
that there exists a {\it planar} matchgate $G$ with signature $\Gamma_G=\Gamma$.
We construct such a $G$ from $K_k$.
Consider the convex embedding of $K_k$. For $k \ge 4$
it has some edge crossings, as shown in Fig.\ \ref{fig:k5}.
The planar graph $G$ is created by replacing each edge crossing with a \emph{crossover gadget} from Fig.\ \ref{fig:crossovergadget}.
The crossover gadget is itself a matchgate $X$ with the following signature:
\begin{align}
X^{0000}&=1\\
X^{0101}&=1\\
X^{1010}&=1\\
X^{1111}&=-1
\end{align}
and for all other $\beta \in \{0, 1\}^4$, $X^\beta=0$.
We note that even though geometrically this gadget is only symmetric 
under a rotation of $\pi$ (but not $\pi/2$), its signature
is invariant under a cyclic permutation, and thus functionally
it is symmetric under a rotation of $\pi/2$.
Now we replace every crossing of a pair of edges in the embedded $K_k$
by a copy of $X$.
For example, this replacement by the crossover gadget
changes Fig.\ \ref{fig:k5} to Fig.\ \ref{fig:planarizedk5}.
If an edge $\{i, j\}$ in $K_k$ crosses some other edges (this happens for every
non-adjacent $i$ and $j$ in the cyclic sense), 
then this replacement breaks the edge
$\{i, j\}$ into several parts. If $\{i, j\}$ crosses $t \ge 0$ other edges,
then it is replaced by $t+1$ edges (outside of crossover gadgets)---we
will call them the \emph{$i$-$j$-passage}---in
 addition to $t$ copies
of the crossover gadget. Of course one
 copy of the crossover gadget is used for both
edges of a pair of crossing edges in this replacement.
Define $I$ to be the set of all edges in $G$ 
that are \emph{not} part of a crossover gadget.
Then each edge $\{i, j\}$ in $K_k$ defines a unique subset of
edges in $I$, which is the \emph{$i$-$j$-passage}.
 It is clear that $I$ is
a disjoint union of these $i$-$j$-passages,
over all $\binom{k}{2}$ pairs $1 \le i<j \le k$.
Finally we choose one edge in each $i$-$j$-passage to have
the weight $\Gamma^{[k]-\{i, j\}}$,
namely the edge weight of $\{i, j\}$  in $K_k$.
To be specific,  we will choose this edge to be the
one adjacent to $i$, the lower indexed external node of $\{i, j\}$.
All other edges of $I$ are assigned weight one. See Fig.\ \ref{fig:labeledplanarized}.
This  defines our planar matchgate $G$ with external nodes
 $1 < 2 < \ldots < k$.

\begin{figure}
\centering
\begin{tikzpicture}
[vertex/.style={circle,draw,fill=black},input/.style={circle,draw}]
\node[input](input1){1};


\node[vertex](gadge1)[below=of input1]{};
\node[vertex](gadge2)[right=of gadge1]{};


\node[input](input2)[above=of gadge2]{2};
\node[input](input3)[below=of gadge1]{4};
\node[input](input4)[below=of gadge2]{3};

\path [-] (gadge1) edge node[above] {$-1$} (gadge2);
\draw [-] (input1) -- (input2) -- (gadge2) -- (input4) -- (input3) -- (gadge1) -- (input1);
\end{tikzpicture}
\caption{The crossover gadget. The external nodes are those labeled, and all edge weights are $1$, except the edge labeled $-1$.}\label{fig:crossovergadget}
\end{figure}
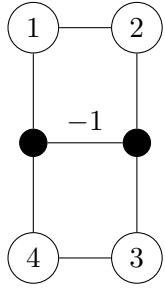

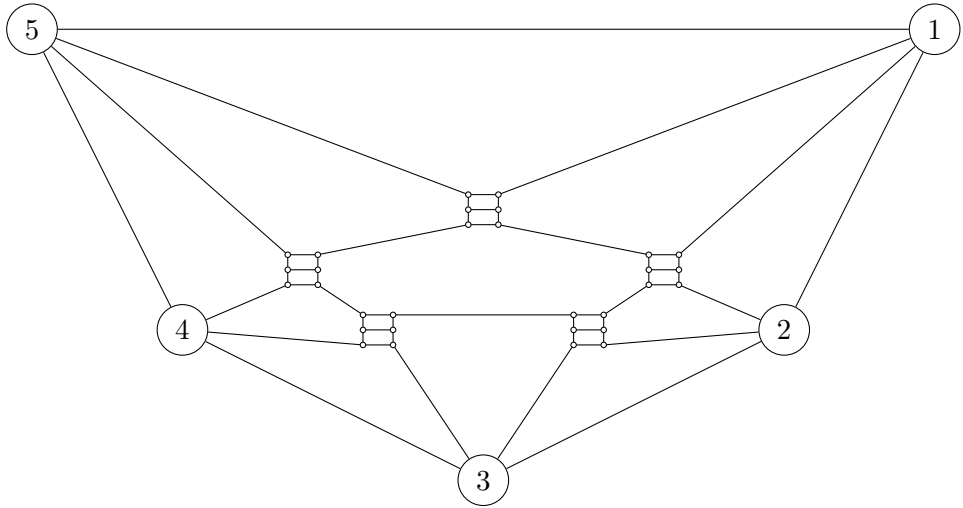
\begin{figure}
\centering
\begin{tikzpicture}
[vertex/.style={circle,draw},input/.style={circle,draw},inner/.style={circle,draw,scale=0.2}]
\node[vertex](1) at (0,6) {5};
\node[vertex](2) at (2,2) {4};
\node[vertex](3) at (6,0) {3};
\node[vertex](4) at (10,2) {2};
\node[vertex](5) at (12,6) {1};

\foreach \p/\q in {1/2,2/3,3/4,4/5,5/1} {
\path [-] (\p) edge (\q);
}

\node[inner](a1) at (5.8 , 3.8) {};
\node[inner](a2) at (6.2 , 3.8) {};
\node[inner](a3) at (5.8 , 3.6) {};
\node[inner](a4) at (6.2 , 3.6) {};
\node[inner](a5) at (5.8 , 3.4) {};
\node[inner](a6) at (6.2 , 3.4) {};

\path [-] (1) edge (a1);
\draw [-] (5) to (a2);

\node[inner](b1) at (3.4,3.0) {};
\node[inner](b2) at (3.8,3.0) {};
\node[inner](b3) at (3.4,2.8) {};
\node[inner](b4) at (3.8,2.8) {};
\node[inner](b5) at (3.4,2.6) {};
\node[inner](b6) at (3.8,2.6) {};

\path [-] (1) edge  (b1);
\path [-] (2) edge  (b5);
\draw [-] (b2) to (a5);

\node[inner](c1) at (4.4,2.2) {};
\node[inner](c2) at (4.8,2.2) {};
\node[inner](c3) at (4.4,2.0) {};
\node[inner](c4) at (4.8,2.0) {};
\node[inner](c5) at (4.4,1.8) {};
\node[inner](c6) at (4.8,1.8) {};

\draw[-] (c1) to (b6);
\path[-] (c5) edge  (2);
\path[-] (c6) edge (3);

\node[inner](d1) at (7.2,2.2) {};
\node[inner](d2) at (7.6,2.2) {};
\node[inner](d3) at (7.2,2.0) {};
\node[inner](d4) at (7.6,2.0) {};
\node[inner](d5) at (7.2,1.8) {};
\node[inner](d6) at (7.6,1.8) {};

\draw[-] (c2) to (d1);
\path[-] (d5) edge  (3);
\draw[-] (d6) to (4);

\node[inner](e1) at (8.2,3.0) {};
\node[inner](e2) at (8.6,3.0) {};
\node[inner](e3) at (8.2,2.8) {};
\node[inner](e4) at (8.6,2.8) {};
\node[inner](e5) at (8.2,2.6) {};
\node[inner](e6) at (8.6,2.6) {};

\draw [-] (5) to (e2);
\draw [-] (4) to (e6);
\draw [-] (e5) to (d2);
\draw [-] (e1) to (a6);

\foreach \a in {a,b,c,d,e} {
	\path[-] (\a3) edge (\a4); 
	\foreach \p/\q in {1/2,5/6} {
	\draw[-] (\a\p) to (\a\q);
	}
	\foreach \p/\q in {1/3,2/4,3/5,4/6} {
	\draw[-] (\a\p) to (\a\q);
	}
}

\end{tikzpicture}
\caption{The graph from Fig.\ \ref{fig:k5} with the crossovers replaced by crossover gadgets from Fig.\ \ref{fig:crossovergadget}.}\label{fig:planarizedk5}
\end{figure}

\begin{figure}
\centering
\begin{tikzpicture}
[vertex/.style={circle,draw},input/.style={circle,draw},inner/.style={circle,draw,scale=0.2}]
\node[vertex](5) at (0,6) {5};
\node[vertex](4) at (2,2) {4};
\node[vertex](3) at (6,0) {3};
\node[vertex](2) at (10,2) {2};
\node[vertex](1) at (12,6) {1};

\foreach \p/\q in {1/5,1/2,2/3,3/4,4/5} {
\path [-] (\p) edge node[above,scale=0.8] {$w(\{\p,\q\})$} (\q);
}

\node[inner](a2) at (5.8 , 3.8) {};
\node[inner](a1) at (6.2 , 3.8) {};
\node[inner](a4) at (5.8 , 3.6) {};
\node[inner](a3) at (6.2 , 3.6) {};
\node[inner](a6) at (5.8 , 3.4) {};
\node[inner](a5) at (6.2 , 3.4) {};

\node[inner](b2) at (8.2,3.0) {};
\node[inner](b1) at (8.6,3.0) {};
\node[inner](b4) at (8.2,2.8) {};
\node[inner](b3) at (8.6,2.8) {};
\node[inner](b6) at (8.2,2.6) {};
\node[inner](b5) at (8.6,2.6) {};

\node[inner](e2) at (3.4,3.0) {};
\node[inner](e1) at (3.8,3.0) {};
\node[inner](e4) at (3.4,2.8) {};
\node[inner](e3) at (3.8,2.8) {};
\node[inner](e6) at (3.4,2.6) {};
\node[inner](e5) at (3.8,2.6) {};

\node[inner](d2) at (4.4,2.2) {};
\node[inner](d1) at (4.8,2.2) {};
\node[inner](d4) at (4.4,2.0) {};
\node[inner](d3) at (4.8,2.0) {};
\node[inner](d6) at (4.4,1.8) {};
\node[inner](d5) at (4.8,1.8) {};

\node[inner](c2) at (7.2,2.2) {};
\node[inner](c1) at (7.6,2.2) {};
\node[inner](c4) at (7.2,2.0) {};
\node[inner](c3) at (7.6,2.0) {};
\node[inner](c6) at (7.2,1.8) {};
\node[inner](c5) at (7.6,1.8) {};

\path [-] (1) edge node[above,scale=0.8] {$w(\{1,4\})$} (a1);
\draw [-] (5) to (a2);

\path [-] (1) edge node[above,scale=0.8] {$w(\{1,3\})$} (b1);
\path [-] (2) edge node[above,scale=0.8] {$w(\{2,5\})$} (b5);
\draw [-] (b2) to (a5);

\draw[-] (c1) to (b6);
\path[-] (c5) edge node[above,scale=0.8] {$w(\{2,4\})$} (2);
\path[-] (c6) edge (3);

\draw[-] (c2) to (d1);
\path[-] (d5) edge node[above,scale=0.8] {$w(\{3,5\})$} (3);
\draw[-] (d6) to (4);

\draw [-] (5) to (e2);
\draw [-] (4) to (e6);
\draw [-] (e5) to (d2);
\draw [-] (e1) to (a6);

\foreach \a in {a,b,c,d,e} {
	\path[-] (\a3) edge (\a4);
	\foreach \p/\q in {1/2,5/6} {
	\draw[-] (\a\p) to (\a\q);
	}
	\foreach \p/\q in {1/3,2/4,3/5,4/6} {
	\draw[-] (\a\p) to (\a\q);
	}
}

\end{tikzpicture}
\caption{The ``planarized'' $K_5$ with edge weights.
The unlabeled edges have weight $1$.}\label{fig:labeledplanarized}
\end{figure}
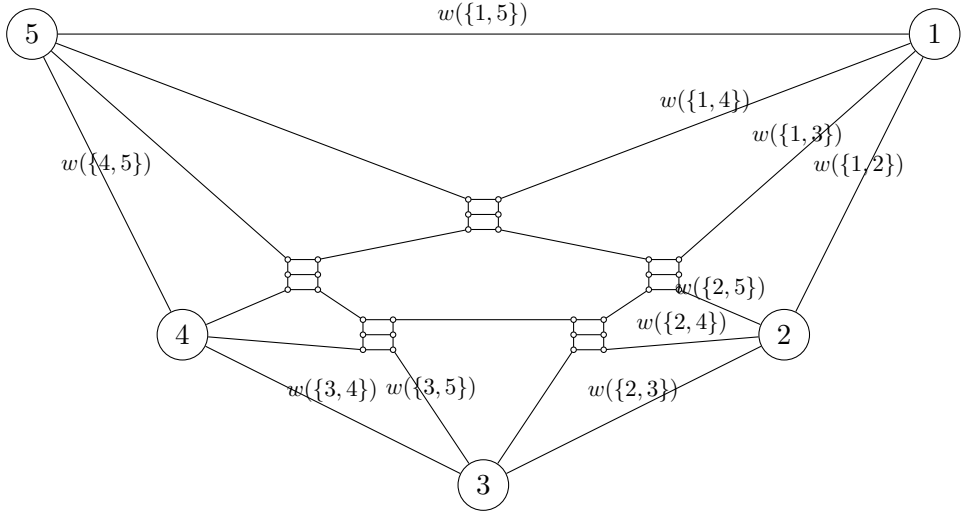

We claim that $\Gamma_G=\Gamma$.

Fix any $\alpha \in \{0, 1\}^k$.
For any $S \subseteq I$,
define ${\mathcal M}_{S}(G^\alpha)$ to be the subset of all 
perfect matchings $M' \in {\mathcal M}(G^\alpha)$ 
such that $M' \cap I= S$.
Every perfect matching $M \in {\mathcal M}(K_k^{\alpha})$
defines a collection  of $i$-$j$-passages, for all $\{i, j\} \in M$.
Let $S(M)$ be the union of these $i$-$j$-passages.
Clearly the 
perfect matching $M \in {\mathcal M}(K_k^{\alpha})$
can be recovered from $S(M)$, and is unique for the given $S(M)$.
There is a 1-1 correspondence between $M$ and $S(M)$.
As an example, we consider $M=\{\{1,3\},\{2,5\}\}\in \mathcal M(K_5^{00010})$.
The set $S(M)$ for $G^{00010}$ is indicated in Fig.\ \ref{fig:highlightedk5}.
\begin{figure}
\centering
\begin{tikzpicture}
[vertex/.style={circle,draw},input/.style={circle,draw},inner/.style={circle,draw,scale=0.2}]
\node[vertex](5) at (0,6) {5};
\node[vertex](3) at (6,0) {3};
\node[vertex](2) at (10,2) {2};
\node[vertex](1) at (12,6) {1};

\foreach \p/\q in {1/2,2/3,5/1} {
\draw [-] (\p) to (\q);
}
\node[inner](a2) at (5.8 , 3.8) {};
\node[inner](a1) at (6.2 , 3.8) {};
\node[inner](a4) at (5.8 , 3.6) {};
\node[inner](a3) at (6.2 , 3.6) {};
\node[inner](a6) at (5.8 , 3.4) {};
\node[inner](a5) at (6.2 , 3.4) {};

\node[inner](b2) at (8.2,3.0) {};
\node[inner](b1) at (8.6,3.0) {};
\node[inner](b4) at (8.2,2.8) {};
\node[inner](b3) at (8.6,2.8) {};
\node[inner](b6) at (8.2,2.6) {};
\node[inner](b5) at (8.6,2.6) {};

\node[inner](e2) at (3.4,3.0) {};
\node[inner](e1) at (3.8,3.0) {};
\node[inner](e4) at (3.4,2.8) {};
\node[inner](e3) at (3.8,2.8) {};
\node[inner](e6) at (3.4,2.6) {};
\node[inner](e5) at (3.8,2.6) {};

\node[inner](d2) at (4.4,2.2) {};
\node[inner](d1) at (4.8,2.2) {};
\node[inner](d4) at (4.4,2.0) {};
\node[inner](d3) at (4.8,2.0) {};
\node[inner](d6) at (4.4,1.8) {};
\node[inner](d5) at (4.8,1.8) {};

\node[inner](c2) at (7.2,2.2) {};
\node[inner](c1) at (7.6,2.2) {};
\node[inner](c4) at (7.2,2.0) {};
\node[inner](c3) at (7.6,2.0) {};
\node[inner](c6) at (7.2,1.8) {};
\node[inner](c5) at (7.6,1.8) {};

\draw [-] (1) to (a1);
\draw [thick] (5) to (a2);

\draw [thick] (1) to (b1);
\draw [thick] (2) to (b5);
\draw [thick] (b2) to (a5);

\draw[thick] (c1) to (b6);
\draw[-] (c5) to (2);
\draw[thick] (c6) to (3);

\draw[-] (c2) to (d1);
\draw[-] (d5) to (3);

\draw [-] (5) to (e2);
\draw [-] (e5) to (d2);
\draw [-] (e1) to (a6);

\foreach \a in {a,b,c,d,e} {
	\path[-] (\a3) edge (\a4); 
	\foreach \p/\q in {1/2,5/6} {
	\draw[-] (\a\p) to (\a\q);
	}
	\foreach \p/\q in {1/3,2/4,3/5,4/6} {
	\draw[-] (\a\p) to (\a\q);
	}
}

\end{tikzpicture}
\caption{The thick edges comprise $S(M)$ for $G^{00010}$,
where $M=\{\{1,3\},\{2,5\}\}$.}\label{fig:highlightedk5}
\end{figure}
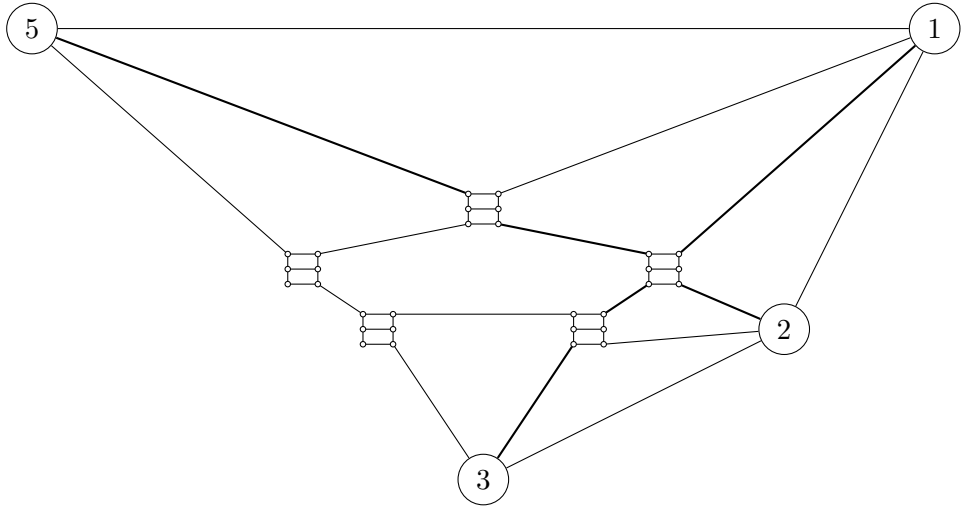

We will show that, for the purpose of computing 
the signature entry $\Gamma_G^\alpha$, we only   need to consider
those perfect matchings $M' \in {\mathcal M}(G^{\alpha})$
that satisfy the following property:
\begin{quote}
{\bf Property:} There exists an $M \in {\mathcal M}(K_k^{\alpha})$,
such that 
\begin{equation}\label{proper-Ms}
M' \cap I = S(M).
\end{equation}
\end{quote}

This is a consequence of properties of the crossover gadget.
If $i$ is an external node in $G^{\alpha}$,
then any $M' \in {\mathcal M}(G^{\alpha})$ must contain
a unique edge $e'$ adjacent to $i$.
There is a unique $j$, which is another external node in $G$,
such that $e'$ belongs  to the $i$-$j$-passage.
Then by the properties of the crossover gadgets
along this $i$-$j$-passage, we may assume $M'$ contains all
edges of this $i$-$j$-passage, saturating $j$.
In particular $j$ belongs to $G^{\alpha}$. All other $M'$
collectively contribute 0, since the evaluation of the
crossover gadget $X$ will be 0.
More generally, in the computation of $\Gamma_G^{\alpha}
= \sum_{M' \in {\mathcal M}(G^{\alpha})} \prod_{e'\in M'}w(e')$,
we classify all $M' \in {\mathcal M}(G^{\alpha})$ according to
$M' \cap I$.  If $S \not = S(M)$ for any $M \in 
{\mathcal M}(K_k^{\alpha})$, then
\begin{equation}\label{improperM-sumstozero}
\sum_{M' \in {\mathcal M}(G^{\alpha}): ~M' \cap I = S~~} 
\prod_{e'\in M'}w(e') = 0.
\end{equation}
In fact, for any $M' \in {\mathcal M}(G^{\alpha})$ such that
$M' \cap I = S$ which is not $S(M)$ 
for any $M \in  {\mathcal M}(K_k^{\alpha})$, 
it must be the case that at some crossover gadget $X$,
$S$ induces an external removal pattern $\beta \not \in
\{0000, 0101, 1010, 1111\}$. Then $X^\beta =0$, and
(\ref{improperM-sumstozero}) follows.

Thus we restrict to those 
perfect matchings $M' \in {\mathcal M}(G^{\alpha})$
that satisfy the property (\ref{proper-Ms}).
For any $M \in {\mathcal M}(K_k^\alpha)$, it is clear that
\begin{equation}\label{c(M)-sign}
\sum_{M'\in {\mathcal M}_{S(M)}(G^\alpha)}      \prod_{e'\in M'}w(e')
= (-1)^{c(M)} \prod_{e \in M} w(e),
\end{equation}
where $c(M)$ counts the number of copies of $X$ where
the external removal pattern is $\beta = 1111$.  Thus
 $c(M)$  is exactly the number of overlapping pairs in $M$.
It follows that
\begin{eqnarray}\label{eq:planarizinglemmaeq1}
\Gamma_G^\alpha
&=&\sum_{M'\in {\mathcal M}(G^\alpha)}\prod_{e'\in M'}w(e')\\
&=&\sum_{S \subseteq I} \sum_{M'\in {\mathcal M}_{S}(G^\alpha)}
      \prod_{e'\in M'}w(e')\\
&=& \sum_{M \in {\mathcal M}(K_k^\alpha)}
\sum_{M'\in {\mathcal M}_{S(M)}(G^\alpha)}      \prod_{e'\in M'}w(e')\\
&=& \Pf(K_k^\alpha).
\end{eqnarray}
The last equality is because each Pfaffian term in $\Pf(K_k^\alpha)$
has exactly the same sign as in (\ref{c(M)-sign}).
Hence $\Gamma_G=\Gamma$
 follows from this and (\ref{k-choose-2-determine-Gamma}).

\begin{theorem}\label{MGI-is-sufficient}
Any 
 $\Gamma\in(\mathbb C^2)^{\tensor k}$ satisfying the Matchgate Identities 
is the signature of a
matchgate with $k$ external nodes.
The matchgate has $O(k^4)$ nodes.
If $\Gamma^{11\ldots 1} =1$, achievable by a normalization
for every nonzero $\Gamma$, there exists a skew-symmetric
matrix $M \in \mathbb{C}^{k \times k}$ such that
$\Gamma^\alpha = \Pf(M^\alpha)$,
where $M^\alpha$ is the matrix obtained from $M$ by deleting
all rows and columns belonging to the subset denoted by $\alpha$.
\end{theorem}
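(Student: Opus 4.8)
The plan is to assemble the constructions already carried out in this section into the three assertions of the theorem: the skew-symmetric Pfaffian representation under normalization, planar realizability, and the $O(k^4)$ size bound. Almost all of the work is done; the proof is mainly a matter of collecting the pieces and identifying the matrix.

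First I would dispatch the normalized case $\Gamma^{11\ldots1}=1$, where the substance lies. Take the skew-symmetric weighted adjacency matrix $A_{K_k}$ of the convex-embedded complete graph $K_k$, with the weight of edge $\{i,j\}$ (for $i<j$) set to $\Gamma^{[k]-\{i,j\}}$, so that $(A_{K_k})_{i,j} = \Gamma^{[k]-\{i,j\}}$ and $(A_{K_k})_{j,i} = -\Gamma^{[k]-\{i,j\}}$. This is the $k \times k$ matrix $M$ demanded by the theorem, and for each $\alpha$ the submatrix $M^\alpha$ obtained by deleting the rows and columns indexed by the $1$-bits of $\alpha$ is exactly the matrix of $K_k^\alpha$. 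Hence equation~(\ref{k-choose-2-determine-Gamma}) gives $\Pf(M^\alpha) = \Pf(K_k^\alpha) = \Gamma^\alpha$ for every $\alpha$, which is precisely the claimed representation. Its justification is already in place: the base case (Hamming weight at least $k-2$) holds by the choice of weights, and the inductive step is forced because, by Theorem~\ref{gpi2}, the Pfaffian minors satisfy the very same recursion~(\ref{eq:mgiapplied}) that MGI impose on $\Gamma$.

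Next I would invoke the planarization to pass from the non-planar $K_k$ to an actual planar matchgate. Replacing each of the $\binom{k}{4}$ edge crossings of the embedded $K_k$ by a copy of the crossover gadget $X$ of Fig.~\ref{fig:crossovergadget} yields a plane graph $G$, and the chain of equalities ending in~(\ref{eq:planarizinglemmaeq1}) shows $\Gamma_G^\alpha = \Pf(K_k^\alpha)$ for all $\alpha$. The argument has two parts, both already established: only those perfect matchings $M'$ of $G^\alpha$ whose restriction to $I$ equals some $S(M)$ contribute, since any other restriction forces an illegal removal pattern $\beta \notin \{0000,0101,1010,1111\}$ at some gadget and hence $X^\beta=0$; and the sign $(-1)^{c(M)}$ contributed by the $c(M)$ gadgets left in state $1111$ equals the Pfaffian sign of the corresponding term in $\Pf(K_k^\alpha)$, because $c(M)$ counts exactly the overlapping pairs of $M$. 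Combining with the previous paragraph gives $\Gamma_G = \Gamma$.

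Finally I would remove the normalization and record the size. If $\Gamma$ is identically zero it is realized by any matchgate without perfect matchings. Otherwise fix $\beta$ with $\Gamma^\beta \neq 0$; the shifted, rescaled vector $\Gamma'^\alpha = \Gamma^{\alpha \oplus \overline\beta}/\Gamma^\beta$ still satisfies MGI and has $\Gamma'^{11\ldots1}=1$, so it is realized by a matchgate $G'$ as above, and the Preprocessing recovers a matchgate for $\Gamma$ by adjoining an isolated edge of weight $\Gamma^\beta$ together with unit-weight pendant edges at the positions where $\overline\beta$ is $1$. For the bound, $K_k$ has $\binom{k}{4}=O(k^4)$ crossings, each replaced by a constant-size gadget, and the total length of all $i$-$j$-passages is $\binom{k}{2}+2\binom{k}{4}=O(k^4)$, so $G'$, and hence $G$, has $O(k^4)$ nodes. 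The only step that genuinely requires care — and the crux of the whole construction — is the sign bookkeeping in the planarization: verifying that the single $-1$ inside each crossover gadget reproduces exactly the parity of overlapping pairs, so that $\Gamma_G^\alpha$ equals $\Pf(K_k^\alpha)$ on the nose rather than merely up to a global sign. Everything else is assembly of the preceding lemmas.
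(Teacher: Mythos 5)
Your proposal is correct and takes essentially the same route as the paper: the paper's proof of this theorem is exactly the assembly of the constructions in Section~\ref{sec:realizing} --- the weighted $K_k$ whose Pfaffian minors are forced to equal $\Gamma^\alpha$ by induction on Hamming weight using MGI and Theorem~\ref{gpi2}, the crossover-gadget planarization with the sign bookkeeping $(-1)^{c(M)}$ matching the Pfaffian sign, and the preprocessing that removes the normalization. The only difference is presentational: you make explicit the identification of the matrix $M$ with the skew-symmetric adjacency matrix of the weighted $K_k$ and the $O(k^4)$ node count, both of which the paper leaves implicit.
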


\section{Character}\label{Character}
In \cite{valiant02a} Valiant showed that a fragment
 of quantum computation could be simulated in polynomial time through the \emph{character} of general (not-necessarily-planar) matchgates.
The notion of a general matchgate and its character ultimately inspired planar matchgates and their signatures.
The character is directly based on the notion of the Pfaffian, and what counting problems are expressible in that form.
This section will be concerned with characters and general matchgates.
It will conclude by proving that characters of general matchgates are essentially equivalent to signatures of planar matchgates.

\subsection{Definitions}
\paragraph{The Pfaffian of an Undirected Graph}
For an undirected, labeled, weighted graph $G=(V,E,W)$ there is
 a skew-symmetric matrix $M_G$.
For $i<j$, we define $(M_G)_{i,j}=w(\{i,j\})$, the weight of the edge $\{i,j\}\in E$.
If that edge does not exist, we say the weight is 0.
For $i>j$, we define $(M_G)_{i,j}=-w(\{i,j\})$.
We define $\Pf(G)=\Pf(M_G)$.

\paragraph{General Matchgate}
A general matchgate $G=(V,E,W)$ is an undirected,
labeled, weighted  graph with three designated subsets of $V$. 
The set $X\subseteq V$ is the set of \emph{input} nodes, the set $Y\subseteq V$ is the set of \emph{output} nodes, and the set $T\subseteq V$ is the set of omittable nodes.
These three subsets are disjoint.
The nodes in $X \cup Y$ are called  \emph{external nodes}.
They also define a (possibly nonempty) fourth subset $U=V-(X\cup Y\cup T)$.

The ordered labeling of the nodes of $G$ obey some rules:
$\forall i \in X,~ \forall j\in T : i < j$, and $\forall j\in T,~ \forall \ell\in Y : j < \ell$.
In other words, ordered from low-to-high, the input nodes $X$ come first, then the omittable nodes $T$, and finally the output nodes $Y$.
The remaining nodes can be interspersed throughout the ordering.

\paragraph{The Omittable Nodes}
For $G$ with a set of omittable nodes $T$, 
we define the ``Pfaffian Sum'', $\PfS$, as follows:
\begin{equation}
\PfS(G) = \sum_{W\subseteq T}\Pf(G-W)
\end{equation}
where the sum is over all subsets $W$ of $T$, and $G-W$ is the graph 
obtained from $G$ with all nodes in $W$ and their incident edges removed.

We can express this solely in terms of Pfaffians as well.
Let $I$ be the index set of $M_G$.
Define $\lambda_i=1$ if $i$ is an index corresponding to an omittable node,
and $\lambda_i=0$ otherwise.
Then,
\begin{equation}
\PfS(G) = \sum_{A\subseteq I}(\prod_{i \in A} \lambda_i)\Pf(M_G[A])
\end{equation}
where the sum is over all subsets $A$ of $I$, and $M_G[A]$ is the matrix
obtained from  $M_G$ with the rows and columns indexed by $A$ removed.
It was shown in \cite{valiant02a} that, for a size-$n$ graph:
\begin{equation}
\PfS(G) = \begin{cases}
\Pf(M_G+\Lambda^{(n)}) &\quad\text{if $n$ even}\\
\Pf(M_G^{+}+\Lambda^{(n+1)}) &\quad\text{if $n$ odd}\end{cases}
\end{equation}
where $\Lambda^{(n)}$ is a simple matrix constructed from the $\lambda_i$ values
and $M_G^{+}$ is $M_G$ with an additional final row, column of all zeros.
Thus the Pfaffian Sum $\PfS(G)$ is also computable in
polynomial time.
This ``omittable node'' feature seems to be quite different from what 
has been presented for planar matchgate signatures.
However, we shall see that it ultimately does not add more power.

\paragraph{The Character of a Matchgate}
Consider  any $Z\subseteq X\cup Y$, a subset of 
the \emph{external nodes} of $G$.
A general matchgate is ultimately part of a larger \emph{matchcircuit}, and the 
external nodes in $G$ are connected to external edges.
The following is from \cite{valiant02a}, ``[w]e consider 
there to exist one external edge from
 each node in $X\cap Z$ and from each node in $Y\cap Z$.
The other endpoint of each of the former is some node of lower index than any in $V$ and of each of the latter is some node of index higher than any in $V$.''

The character of a matchgate $G$ is defined as
\begin{equation}
\chi(G,Z) = \mu(G,Z)\PfS(G-Z).
\end{equation}
The term $\mu(G,Z)$ is called the \emph{modifier value}.
It is one of $\pm 1$, and corresponds to the parity of the overlapping pairs
 between matching edges in $E$ and external edges.
Recall that the number of
overlapping pairs is computed as a function of node labels.
Due to the rules of index ordering, this value is determined by $(G,Z)$,
and is independent of the particular matching in $\PfS(G-Z)$.
Thus $\mu(G,Z)$ is  well-defined for any $(G,Z)$.

We also define the \emph{naked character} $\check{\chi}$ of a matchgate,
without the modifier.
\begin{equation}
\check{\chi}(G,Z) = \PfS(G-Z).
\end{equation}

For brevity and consistency, we 
write $\chi^{\alpha}_G = \chi(G,Z)$, where $\alpha$ is the characteristic bitstring of $Z$.
The naked character will be referred to as $\check{\chi}^{\alpha}_G$.
Where $G$ is clear we may omit it.

\paragraph{Matchcircuits}
These matchgates, not necessarily planar,
 were designed to show that the evaluation of some quantum circuits could be done in polynomial time.
Matchgates can be combined into \emph{matchcircuits}
in specific ways.
The composition is helped by the modifiers; in fact their
sole purpose is to make this composition nicely expressible as
a Pfaffian. 
We will not go into this detail; please see~\cite{valiant02a}.
From another perspective, a matchcircuit is simply a larger matchgate with a modifier value set to a constant $1$, as there are no more edges
external to the entire matchcircuit.
The naked character of a matchcircuit is its character.

\subsection{Equivalence of Naked Characters and Signatures}

We will prove the following theorem:
\begin{theorem}
For a general matchgate $G$ with $k$ external nodes,
there exist two planar matchgates $G_1$
 and $G_2$ such that for all $\alpha \in \{0, 1\}^k$,
\begin{equation}
\check{\chi}_G^\alpha=\Gamma_{G_1}^\alpha+\Gamma_{G_2}^\alpha
\end{equation}
\end{theorem}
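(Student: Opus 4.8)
The plan is to reduce the naked character of a general (possibly non-planar) matchgate to a sum of two planar matchgate signatures by first making the graph planar via the crossover gadget, and then handling the two remaining discrepancies between $\PfS$ and $\PerfMatch$: the omittable nodes and the Pfaffian sign. Concretely, I would proceed in the following order.

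\textbf{Step 1: Planarize via the crossover gadget.} The general matchgate $G$ has a skew-symmetric matrix $M_G$, and $\check{\chi}_G^\alpha = \PfS(G-Z)$ is a sum of Pfaffians of submatrices. Drawing $G$ with its $k$ external nodes on the outer face (in the given order) and its internal/omittable nodes inside, I would replace every edge crossing with a copy of the crossover gadget $X$ from Fig.~\ref{fig:crossovergadget}. Exactly as in Section~\ref{sec:realizing}, the crossover gadget forces each crossing to contribute its signature factor ($1$ on the through-patterns $0000,0101,1010$ and $-1$ on $1111$), so that the planarized graph reproduces the \emph{Pfaffian} (with its overlapping-pair signs) of the original non-planar drawing. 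The key point established there is that $(-1)^{c(M)}$, where $c(M)$ counts overlapping pairs, is precisely the sign $\epsilon_\pi$ in the Pfaffian expansion; thus crossover replacement converts the non-planar Pfaffian into a genuine planar $\PerfMatch$.

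\textbf{Step 2: Handle omittable nodes by attaching dangling edges.} Recall $\PfS(G-Z)=\sum_{W\subseteq T}\Pf((G-Z)-W)$, summing over \emph{all} subsets of the omittable set $T$. To realize this summation-over-subsets inside a single $\PerfMatch$, I would, for each omittable node $t\in T$, attach a new pendant vertex $t'$ joined to $t$ by a weight-$1$ edge, while leaving $t$ also connected to the rest of the graph. In any perfect matching, $t$ is either matched to $t'$ (contributing the ``$t$ omitted'' term, since $t$ then plays no role among the original edges) or matched inside the original graph (contributing the ``$t$ present'' term). Summing over matchings thus automatically ranges over all $W\subseteq T$, reproducing $\PfS$ as a $\PerfMatch$. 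This gadget is planar-compatible and is introduced before the final crossover replacement so the whole object stays planar.

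\textbf{Step 3: Correct the global Pfaffian sign with two matchgates.} After Steps 1--2, I obtain a single planar matchgate whose $\PerfMatch(G^\alpha)$ equals $\pm\,\check{\chi}_G^\alpha$, where the sign is the global Pfaffian-orientation sign $\delta(\alpha)$ studied in Theorem~\ref{thm:delta}. Because $\check{\chi}_G^\alpha$ is defined via the raw Pfaffian (not the perfect-matching polynomial), there may be an $\alpha$-dependent sign discrepancy that a single planar matchgate cannot absorb. The idea is to use \emph{two} planar matchgates $G_1,G_2$ whose signatures add: one realizing the ``positive-sign'' contributions and one the ``negative-sign'' contributions, arranged so that $\Gamma_{G_1}^\alpha+\Gamma_{G_2}^\alpha$ recombines to the correctly-signed value $\check{\chi}_G^\alpha$ for every $\alpha$ simultaneously. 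One clean way is to observe that $\check{\chi}_G$, being a Pfaffian-sum vector, satisfies MGI (by Theorems~\ref{gpi2} and~\ref{thm:pfafmgi}), so by Theorem~\ref{MGI-is-sufficient} the vector $\check{\chi}_G$ is \emph{itself} realizable as one planar matchgate signature; the two-matchgate formulation then covers the case where the naked character vector is split according to parity or sign class and each piece is realized separately before summing.

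\textbf{Main obstacle.} The delicate part is Step 3: controlling the $\alpha$-dependent sign $\delta(\alpha)$ uniformly across all $2^k$ removal patterns. This is exactly the phenomenon that Theorem~\ref{thm:delta} was built to tame, so I would lean on the quadruple-product identity~(\ref{eq:quadruple-product-deltaequality}) to argue that the sign pattern is consistent enough that a \emph{pair} of planar matchgates (rather than one) suffices to absorb it. Verifying that two matchgates genuinely capture the sign structure—rather than needing more, or rather than collapsing to one—is where the real work lies, and I expect the argument to hinge on showing the naked-character vector either directly satisfies MGI or splits into two MGI-satisfying vectors, each realizable by Theorem~\ref{MGI-is-sufficient}.
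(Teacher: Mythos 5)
Your Step 1 is exactly the paper's first move (draw $G$ with vertices in label order so that crossings coincide with overlapping pairs, replace crossings by the crossover gadget, and invoke the argument of Section~\ref{sec:realizing} to get Pfaffian values on the nose), but Steps 2 and 3 each contain a genuine error. Step 2's pendant gadget fails outright: the pendant vertex $t'$ is an internal node of the resulting matchgate, so it must be matched in \emph{every} perfect matching, and its only neighbor is $t$. Hence every perfect matching contains $\{t,t'\}$, and far from ranging over all $W\subseteq T$, your construction forces every omittable node to be ``omitted'' and computes only the single term $\Pf((G-Z)-T)$ of the Pfaffian Sum. What is actually needed is a gadget that can \emph{optionally} absorb any subset of $T$ of a fixed parity with weight $1$: the paper attaches to the $T$-nodes a planar matchgate $H$ of arity $|T|$ with symmetric signature $1$ on even-weight removal patterns and $0$ on odd ones (and a companion $H'$ with the opposite parity), built in Section~\ref{Sym-Signatures}.

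Step 3 misidentifies both the obstacle and the reason the theorem needs \emph{two} matchgates. There is no residual sign to repair: as you yourself observe in Step 1, the crossover planarization gives $\Gamma'^{\alpha}=\Pf(G^{\beta}-W_{\gamma})$ exactly, so Theorem~\ref{thm:delta} and the quadruple-product identity play no role here. Moreover, your ``clean way'' is false: $\check{\chi}_G$ need \emph{not} satisfy MGI, because the naked character can violate the Parity Condition. For example, let $G$ consist of one output node $y$ joined by a weight-$1$ edge to one omittable node $t$; then $\check{\chi}^{0}=\Pf(G)+\Pf(G-t)=1+0=1$ and $\check{\chi}^{1}=\Pf(G-y)+\Pf(G-y-t)=0+1=1$, so $\check{\chi}^{0}\check{\chi}^{1}\neq 0$, which already contradicts the arity-$1$ MGI (and, via Theorem~\ref{MGI-implies-Parity}, any MGI-satisfying vector must obey parity). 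So Theorem~\ref{MGI-is-sufficient} cannot be applied to $\check{\chi}_G$ as a whole. The two matchgates are forced precisely by this parity phenomenon in the omittable sum: for fixed $\beta$, only subsets $W$ with $|W|\equiv |V(G^{\beta})| \pmod 2$ contribute to $\PfS(G-Z)$. The paper therefore builds $G_1$ by attaching $H$ (handling all $\beta$ for which $G^{\beta}$ has evenly many nodes) and $G_2$ by attaching $H'$ (handling the odd case); for each $\alpha$ at most one of $\Gamma_{G_1}^{\alpha},\Gamma_{G_2}^{\alpha}$ can be nonzero, and their sum is $\check{\chi}_G^{\alpha}$. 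Your instinct to split the character into two pieces is right, but the correct splitting is by parity of $|V(G^{\beta})|$, not by sign class.
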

\begin{proof}
$G$ may not be a planar graph. We draw it by placing
its nodes on a semi-circle arc.
The nodes appear in a clockwise ordering,
 ordered exactly by their labels in the graph.
The edges are drawing as chords inside the semi-circle arc.
If we place the nodes in general position,
then any pair of intersecting chords intersect at a unique point.
Observe that two edges $(u,v),(x,y)$, where $u<v$ and $x<y$,
  cross in the drawing exactly when $u<x<v<y$ or $x<u<y<v$,
i.e., exactly when they form an overlapping pair. 
This arrangement is very similar to the planar matchgate construction in Section \ref{sec:realizing}.


We start by replacing every crossing of chords by the planar crossover gadget from Fig.\ \ref{fig:crossovergadget}.
For the purpose of this proof, we may consider $G$ as 
a subgraph of some $K_n$. After each crossing has been replaced by the
crossover gadget we have a planar matchgate $G'$. We consider 
 $X \cup T \cup Y$ as its external nodes. Let $\Gamma'$ be  its signature.
Let $\alpha \in \{0, 1\}^{|X \cup T \cup Y|}$ indicate a bit removal pattern, and let
$\beta$ and $\gamma$ be its restrictions to $X \cup Y$
and $T$ respectively.
The same proof in Section~\ref{sec:realizing}
shows that  
\begin{equation}\label{transition-to-signature}
\Gamma'^{\alpha} = \Pf(G^{\beta} - W_{\gamma}),
\end{equation}
where $W_{\gamma}$ is the subset of $T$ indicated by $\gamma$.


Fix any $\beta \in \{0, 1\}^{|X \cup Y|}$, such that
$G^{\beta}$ has an even number of nodes. Then we only need
to consider $\gamma \in \{0, 1\}^{|T|}$ of even Hamming weight in the
sum (\ref{transition-to-signature}). Similarly, if
$G^{\beta}$ has an odd number of nodes, then
 we only need
to consider $\gamma$ of odd Hamming weight
in (\ref{transition-to-signature}).
 
The following idea is from \cite{valiant08} (p. 1952).
There exists a planar matchgate $H$ with $t = |T|$ external nodes 
such that for any $\gamma \in \{0, 1\}^{|T|}$ of even Hamming weight, $H^\gamma=1$, and 
for any bitstring $\gamma$ of odd Hamming weight, $H^\gamma=0$ (see Section~\ref{Sym-Signatures}).
Clearly $H$ has an even number of nodes, since $H^{00 \ldots 0} = 1$.
We define the planar matchgate $G_1'$ by attaching $H$ to the set $T$ of $G'$
on the side of the semi-circle arc {\it opposite} to all the intersecting
chords in the embedding of $G$. Each node in $T$ is connected
to a distinct external node of $H$ by an edge of weight 1.
We note that
composing $G'$ with $H$ in this fashion does not introduce any more edge crossings, and
 all external nodes $X \cup Y$ still remain on the outer face.

For all $\beta$ where $G_1'^{\beta}$ has an even number of nodes, which
happens exactly when $G^{\beta}$ has an even number of nodes, the following hold:
\begin{equation}
\Gamma_{G_1'}^{\beta} = \sum_{W_{\text{even}}} \Gamma_{G'- W_{\text{even}}}^{\beta}
= \sum_{\text{even }  \gamma} \Gamma_{G'}^{\alpha ( \beta, \gamma)}
= \sum_{\text{even }  \gamma} \Pf(G^{\beta} - W_{\gamma})
= \check{\chi}_G^\beta,
\end{equation}
where
the sum over $W_{\text{even}}$ is  for
all even-sized subsets of $T$,
and $\alpha ( \beta, \gamma)$ is the bit string in 
$\{0, 1\}^{|X \cup T \cup Y|}$ formed by concatenating $\beta$ and $\gamma$, in the proper
order.
If $G_1'^{\beta}$ has an odd
number of nodes, then $\Gamma_{G_1'}^{\beta} = 0$.

There also exists a planar matchgate $H'$  of arity $t = |T|$
such that for any $\beta$ of odd Hamming weight, $H'^\beta=1$,
and for any bitstring $\beta$ of
 even Hamming weight, $H'^\beta=0$ (see Section~\ref{Sym-Signatures}).
Use $H'$ instead of $H$ we can define a planar matchgate  $G_2'$,
which  will have the signature values equal to the
naked character values of $G$ for all $\beta$
for which $G^\beta$ has an odd number of nodes. 
If $G_2'^{\beta}$ has an even
number of nodes, then $\Gamma_{G_2'}^{\beta} = 0$.

This completes the proof.
\end{proof}

Note that a matchcircuit is itself a large general matchgate with only a naked character.
Thus, its character is also expressible as the sum of two signatures
of planar matchgates.

\section{Symmetric Signatures}\label{Sym-Signatures}
We return to planar matchgate signatures.
We say a signature is \emph{even} if it
is the signature of an even matchgate, i.e.,
a matchgate with an even number of nodes.
An even signature has nonzero values only 
for indices of even Hamming weight.
We define an \emph{odd} signature similarly.
A signature $\Gamma$ of a matchgate is \emph{symmetric} if, for all $\alpha,\beta$ of equal Hamming weight, $\Gamma^\alpha=\Gamma^\beta$.
In other words, the value of a signature entry is only a function of how many $1$s are in its index, not their particular pattern.
These signatures are important because they have a clear
combinatorial meaning.
We write a symmetric arity-$k$ signature in the following form $[z_0,z_1,\ldots,z_k]$, where $z_i$ is the value of the signature for
an index of Hamming weight $i$.
The symmetric signatures that obey the MGI have a very concise description, which we prove next.
\begin{theorem}\label{thm:symsig}
If $[z_0,\ldots,z_k]$ is an even symmetric matchgate signature,
then
$z_i=0$ for all odd $i$, and there exist $r_1$ and $r_2$ not both zero 
such that 
for all even $i\geq2$:
\begin{equation}
r_1z_{i-2}=r_2z_i.
\end{equation}
Conversely,
every sequence of values satisfying these conditions is
an even symmetric matchgate signature.
The statement for odd symmetric signatures
is analogous.

Stated equivalently, a sequence is a symmetric matchgate signature iff
it takes the following form:
Alternate entries
of $[z_0,\ldots,z_k]$  are zero 
and the entries at the other alternate positions
form a geometric progression.
\end{theorem}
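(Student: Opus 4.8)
The plan is to prove both directions by reducing everything to how the MGI (Theorem~\ref{thm:mgi}) specialize for a symmetric signature. For necessity I would start from the fact that a symmetric matchgate signature satisfies the MGI, and that the MGI already force the Parity Condition (Theorem~\ref{MGI-implies-Parity}); for an even signature this immediately gives $z_i = 0$ for all odd $i$. The crucial observation is that for a symmetric signature $\Gamma^\gamma = z_{|\gamma|}$, a single MGI instance indexed by $(\alpha,\beta)$ collapses dramatically. At each differing position $p_i$ the contributed factor depends only on whether $\alpha_{p_i}$ is $0$ or $1$ (it is $z_{|\alpha|+1}z_{|\beta|-1}$ in the former case and $z_{|\alpha|-1}z_{|\beta|+1}$ in the latter), so the whole sum becomes $c_0\, z_{|\alpha|+1}z_{|\beta|-1} + c_1\, z_{|\alpha|-1}z_{|\beta|+1}$, where $c_0,c_1$ are the signed counts $\sum(-1)^i$ over the $0$- and $1$-positions among the differing bits. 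Since nonzero terms require $|\alpha|,|\beta|$ odd, the number $l$ of differing bits is even, hence $c_0+c_1=0$ and the identity reduces to $c_0\,(z_{|\alpha|+1}z_{|\beta|-1} - z_{|\alpha|-1}z_{|\beta|+1}) = 0$.

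Next I would choose $(\alpha,\beta)$ so that $c_0 \neq 0$ while targeting any prescribed pair of even indices. Writing $p = |\alpha|+1$ and $q = |\beta|-1$, the two products share the common index-sum $|\alpha|+|\beta|$, so $c_0 \neq 0$ yields the quadratic relation $z_p z_q = z_{p-2}z_{q+2}$. A short bit-counting check shows that for every even $p \le q$ with $p-2 \ge 0$ and $q+2 \le k$ a suitable $(\alpha,\beta)$ exists inside $\{0,1\}^k$: e.g.\ using $l=(q-p)+4$ differing positions with exactly one carrying $\alpha=1$ forces $c_0=\pm1$. I would stress that the three-term case $p=q$ is \emph{not} enough by itself, since the sequence $(z_0,0,\ldots,0,z_k)$ satisfies every three-term relation yet is not of the desired form; it is the full family ``products depend only on index-sum'' that carries the content.

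From these relations I would read off the support $W=\{i \text{ even} : z_i\neq 0\}$. Convexity of $W$ follows from $z_c z_{i+j-c}=z_iz_j$ for $i\le c\le j$, and inspecting the extreme indices via $z_{i_{\min}}^2 = z_{i_{\min}-2}z_{i_{\min}+2}$ shows that if $|W|\ge 2$ then $W$ is all even indices (a genuine geometric progression $z_{i}=\rho\,z_{i-2}$), whereas $|W|=1$ forces the single nonzero entry to sit at $i=0$ or $i=k$. In each case I exhibit $r_1,r_2$ not both zero: $(r_1,r_2)=(z_2/z_0,1)$ in the geometric case, $(0,1)$ for support $\{0\}$, $(1,0)$ for support $\{k\}$, and any nonzero choice for the all-zero signature. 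The odd case is identical after a parity shift, completing necessity.

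For the converse I would run the same computation backwards. The $(r_1,r_2)$ condition is equivalent to ``products depend only on index-sum,'' and since the two products in every collapsed MGI share that sum, every MGI instance holds; realizability then follows immediately from Theorem~\ref{MGI-is-sufficient}. This yields a clean existence proof, but the stated goal of the section is an explicit small planar gadget (also needed for the matchgates $H,H'$ in the character-equivalence argument), so I would additionally give a direct construction, treating the full-geometric case and the two endpoint cases by concrete planar matchgates and computing their signatures combinatorially, using the symmetry of the construction to guarantee the resulting signature is symmetric. The hard part will be exactly this direct construction: producing a planar graph whose weighted perfect-matching counts are simultaneously symmetric (invariant under \emph{which} external nodes are removed, not merely how many) and realize the prescribed geometric values. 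By contrast, the algebraic necessity direction is essentially bookkeeping once the collapse of the MGI for symmetric signatures is in hand.
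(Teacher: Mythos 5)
Your proposal is correct and takes essentially the same route as the paper: parity from Theorem~\ref{MGI-implies-Parity}, specialization of the MGI for a symmetric signature to the index-sum product relations $z_p z_q = z_{p-2}z_{q+2}$ (the paper obtains exactly these from the concrete pair $\alpha = 1^{i+1}0^{k-i-1}$, $\beta = 1^{i}01^{j-i-1}0^{k-j}$, where the alternating sum of an odd number of equal terms collapses to one term), the same support analysis distinguishing the full geometric case from a single extremal nonzero entry, and the converse by checking MGI and invoking Theorem~\ref{MGI-is-sufficient}. The explicit construction you defer is indeed the separate content of Section~\ref{Symmetric-construction}, which the paper's proof of this theorem likewise cites only parenthetically; the lone nitpick is that your single-nonzero-entry case should place the entry at index $0$ or at the \emph{maximum even index} (which is $k-1$ when $k$ is odd), not necessarily at $k$.
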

\begin{proof}
By the Parity Condition, all odd parity entries of the signature
of an even matchgate are zero.
Consider any \emph{even} $i$ and $j$, where $0\le i < j\leq k$.
We invoke the MGI for $\alpha=1^i 1 0^{k-i},\beta=1^{i}01^{j-i-1}0^{k-j}$.
We use the exponentiation notation here to denote repetition.
$\alpha$ has an odd Hamming weight $i+1$ and
 $\beta$ has an odd Hamming weight $j-1$.
Note that $i$ and $j$ being both even implies that $j-i-1 \ge 1$.
Using the fact that $\Gamma$ is symmetric, the MGI under $\alpha,\beta$ becomes:
\begin{equation}\label{eq:symsigstep1}
z_iz_{j}=\sum(\pm)z_{i+2}z_{j-2}.
\end{equation}
There are an odd number ($j-i-1 \ge 1$) of terms in the sum, 
and the terms alternate their signs and begin with a $+$, so we conclude
that
\begin{equation}
z_iz_{j}=z_{i+2}z_{j-2}.
\end{equation}
In particular, if $i$ is even and $0 \le i \le k-4$, then
\begin{equation}
z_iz_{i+4}=z_{i+2}^2.
\end{equation}
If $z_{i+2} \not =0$, then both $z_i \not =0$ and $z_{i+4} \not =0$.
This means that if any even indexed entry that is not the first
or the last even indexed entry (call it a non-extremal entry)
is nonzero, then all even indexed entries are nonzero.
In this case, the geometric progression is established,
with common ratio $z_{i+2}/z_i = z_{i+4}/z_{i+2}$, for even $0 \le i \le k-4$.

Suppose all non-extremal even indexed entries are zero.
If $k \le 3$ then the theorem is self-evident. Suppose $k \ge 4$.
Let $k^* \le k$ be the maximum even index.
Then $k^* \ge 4$ and we have
\begin{equation}
z_0z_{k^*}=z_{2}z_{k^* -2}.
\end{equation}
Note that $k^* -2 \ge 2$ and therefore it is non-extremal.
It follows that $z_0z_{k^*} = 0$ and therefore at most
one extremal even indexed entry can be nonzero.
It is also easy to verify that a sequence satisfying the conditions
of this theorem also satisfies MGI, and hence is a matchgate
signature. (See Section~\ref{Symmetric-construction} for a direct construction.)
This completes the proof for even signatures. The proof for
odd signatures is similar.
The theorem follows.
\end{proof}


Explicitly, there are just four cases for symmetric signatures 
of arity $k$:
\begin{enumerate}
\item $[a^kb^0,0,a^{k-1}b,0,a^{k-2}b^2,0,\ldots,a^0b^k]$
\item $[a^kb^0,0,a^{k-1}b,0,a^{k-2}b^2,0,\ldots,a^0b^k,0]$
\item $[0,a^kb^0,0,a^{k-1}b,0,a^{k-2}b^2,0,\ldots,a^0b^k]$
\item $[0,a^kb^0,0,a^{k-1}b,0,a^{k-2}b^2,0,\ldots,a^0b^k,0]$.
\end{enumerate}


\subsection{Matchgates for Symmetric Signatures}\label{Symmetric-construction}
We have already demonstrated how to build a planar matchgate realizing any MGI-satisfying signature, through a planarizing procedure.
Up till now the only known construction of a matchgate realizing an arbitrary
symmetric signature is through this general procedure.
This is unsatisfactory, since they ought to have more symmetry.
However it is difficult to imagine a geometric construction that is
planar and symmetric for all pairs of external nodes $1 \le i < j \le k$,
if $k \ge 4$.  Now we will present a simple and direct construction.
The constructed matchgates are not geometrically 
symmetric for all pairs of external nodes, but functionally they are,
in terms of the signatures.


We present two closely related matchgate
constructions, one for even symmetric signatures, and the other for odd, which is
a simple modification for the even signature case.
Our constructions for both these cases work regardless if the signature has odd or even arity.

In Fig.\ \ref{fig:symsig6} we have an example of a planar matchgate for an even, arity-6 signature.
Its design can be described as a cycle of triangles which share vertices (each triangle has two weight $x$ edges, and a weight $y$ base).
For odd signatures, the construction is changed very slightly, as shown in Fig.\ \ref{fig:oddsymsig6}.
The only modification is to delete one external node in a matchgate for an even symmetric signature
of arity one higher.

More specifically, to construct in general an even matchgate $G$ of arity $k$,
we first take $k$ triangles with vertices $\{a_i, b_i, c_i\}$ ($1 \le i \le k$).
The edges $\{a_i, b_i\}$ and $\{a_i, c_i\}$ have weight $x$,
and $\{b_i, c_i\}$ has weight $y$.
We link them in a cycle, identifying $c_i$ with $b_{i+1}$,
where the index is counted modulo $k$.
The matchgate $G$ has $k$ external nodes $\{a_1, \ldots, a_k\}$,
and a total of $2k$ nodes.

Consider any $\alpha \in \{0, 1\}^k$ of even Hamming weight.
$\alpha_i =0$ iff $a_i$ remains in $G^\alpha$.
If $\alpha =1^k$, then $G^\alpha$ is a cycle of length $k$.
If $k$ is odd, of course $G^\alpha$ has no perfect matchings.
If $k$ is even, there are exactly two perfect matchings,
each having weight $y^{k/2}$.

Now assume  $\alpha  \not = 1^k$. Then cyclically
$\alpha$ alternates between consecutive  0's (called a {\it 0-run})
and consecutive 1's (called a {\it 1-run}).
Each $a_i$ that remains  in $G^\alpha$  must be matched to either $b_i$ (we call it
{\it left-match}) or $c_i = b_{i+1}$ (we call it {\it right-match}),
both with weight $x$.
Consider any 0-run.
It is clear that either all  $a_i$ within this 0-run
left-match or all right-match.
Next consider a 1-run of $m$ 1's; it is between two 0-runs.
If $m$ is even, then the path  of $m$ edges all with weight $y$
forces the two neighboring 0-runs to take
either both left-match or both right-match.
Moreover, both possibilities are realizable, and in each
case the 1-run contributes a weight $y^{m/2}$.
If $m$ is odd, then  the path  of $m$ edges
forces the two neighboring 0-runs to take
opposite types of left-match and right-match.
Again both possibilities are realizable;
in one case the 1-run contributes a weight $y^{(m-1)/2}$,
and in another case it contributes a weight $y^{(m+1)/2}$. Furthermore, for two
1-runs $1^m$ and $1^{m'}$ both  of odd length and are consecutive in the sense that the only
1-runs in between are of even length,  they contribute a combined weight 
$y^{(m+m')/2}$.
Since $\alpha$ has an even Hamming weight $|\alpha|$, there is
an even number of 1-runs of odd length.
Hence together the 1-runs contribute a weight $y^{|\alpha|/2}$.
There are exactly two perfect matchings in $G^\alpha$, each uniquely determined by
the left-match or right-match choice
 of any particular $a_i$ in $G^\alpha$.
It follows that the signature value is $\Gamma^\alpha= 2 x^{k-|\alpha|}
y^{|\alpha|/2}$.
Clearly by choosing $x$ and $y$ suitably, we can realize
an arbitrary even symmetric signature.  

The construction  for odd symmetric signatures
is to remove one external vertex in the matchgate for
an even symmetric signature of arity one higher.  By 
the general form of odd symmetric signatures, being a sub-signature
$[z_1,  \ldots, z_n]$
of an even symmetric signature 
$[z_0, z_1,  \ldots, z_n]$, the proof is complete.

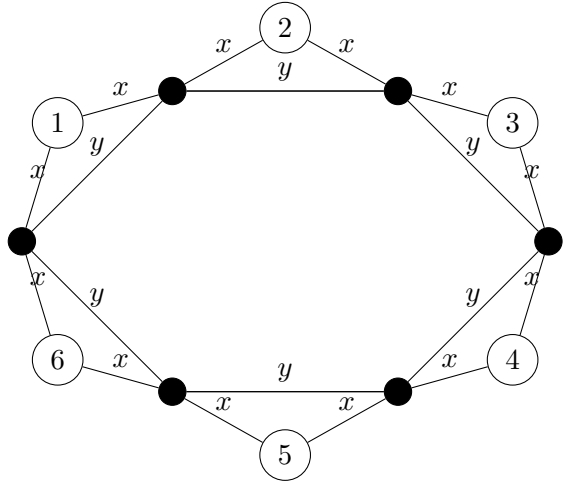
\begin{figure}
\centering
\begin{tikzpicture}
[vertex/.style={circle,draw,fill=black},input/.style={circle,draw}]
\node[vertex](1) at (0,2/1) {};
\node[vertex](2) at (2/1,4/1) {};
\node[vertex](3) at (5/1,4/1) {};
\node[vertex](4) at (7/1,2/1) {};
\node[vertex](5) at (5/1,0) {};
\node[vertex](6) at (2/1,0) {};

\node[input](in1)[above right=1.2cm and 0.1cm of 1] {1};
\node[input](in3)[above left=1.2cm and 0.1cm of 4] {3};

\node[input](in4)[below left=1.2cm and 0.1cm of 4] {4};
\node[input](in6)[below right=1.2cm and 0.1cm of 1] {6};

\path[-] (2) edge node[above=0.5cm,circle,draw](in2) {2} (3);
\path[-] (5) edge node[below=0.5cm,circle,draw](in5) {5} (6);

\foreach \a/\b in {1/2,2/3,3/4,4/5,5/6,6/1} {
	\path[-] (\a) edge node[above] {$y$} (\b);
}
\foreach \a in {1,2,3,4,5,6} {
	\path[-] (in\a) edge node[above] {$x$} (\a);
}
\foreach \a/\b in {1/2,2/3,3/4,4/5,5/6,6/1} {
	\path[-] (in\a) edge node[above] {$x$} (\b);
}
\end{tikzpicture}
\caption{A matchgate for an even, symmetric, arity-$6$ signature.}\label{fig:symsig6}
\end{figure}

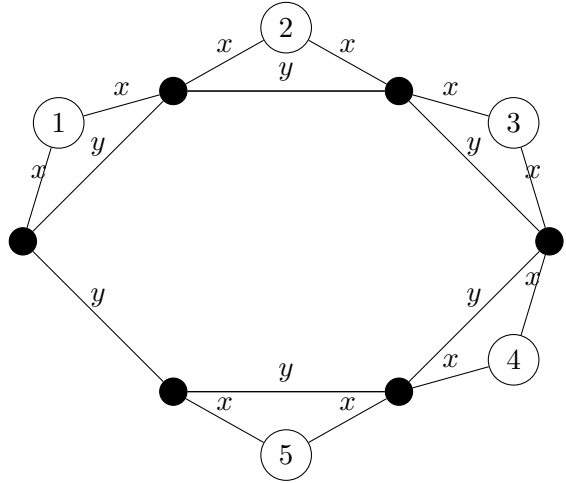
\begin{figure}
\centering
\begin{tikzpicture}
[vertex/.style={circle,draw,fill=black},input/.style={circle,draw}]
\node[vertex](1) at (0,2/1) {};
\node[vertex](2) at (2/1,4/1) {};
\node[vertex](3) at (5/1,4/1) {};
\node[vertex](4) at (7/1,2/1) {};
\node[vertex](5) at (5/1,0) {};
\node[vertex](6) at (2/1,0) {};

\node[input](in1)[above right=1.2cm and 0.1cm of 1] {1};
\node[input](in3)[above left=1.2cm and 0.1cm of 4] {3};

\node[input](in4)[below left=1.2cm and 0.1cm of 4] {4};

\path[-] (2) edge node[above=0.5cm,circle,draw](in2) {2} (3);
\path[-] (5) edge node[below=0.5cm,circle,draw](in5) {5} (6);

\foreach \a/\b in {1/2,2/3,3/4,4/5,5/6,6/1} {
	\path[-] (\a) edge node[above] {$y$} (\b);
}
\foreach \a in {1,2,3,4,5} {
	\path[-] (in\a) edge node[above] {$x$} (\a);
}
\foreach \a/\b in {1/2,2/3,3/4,4/5,5/6} {
	\path[-] (in\a) edge node[above] {$x$} (\b);
}
\end{tikzpicture}
\caption{A matchgate for an odd, symmetric, arity-$5$ signature.}\label{fig:oddsymsig6}
\end{figure}

%

\section{Conclusion}\label{Conclusion}
Substantial work has been built on
top
of MGI in  the signature theory of
matchgates~\cite{cailu11,
cailu07,
lixia08,
cailu09,
landsbergmortonnorine09,
kowalczyk10,
cailu11a,
morton2011,
cailuxia10,
guowilliams12}.
In particular, a number of complexity dichotomy theorems
have been proved that use this understanding of what matchgates
can and cannot compute.
A general theme of these theorems asserts that
a wide class of locally constrained counting problems 
can be classified into three types: (1) Those that are 
computable in polynomial time for
general graphs; (2) Those that are \#P-hard for general graphs
but computable in polynomial time over planar graphs;
and (3) Those that remain  \#P-hard for 
planar graphs.  Moreover type (2) occurs precise
for problems which can be described by signatures that
are realizable by planar matchgates after a holographic
transformation.  This theme is generally proved for
symmetric signatures~\cite{kowalczyk10,cailuxia10,guowilliams12}.
\nocite{Guo-Lu-Valiant}
\nocite{Guo-Huang-Lu-Xia}
For not-necessarily-symmetric signatures, these are only proved in
special cases~\cite{caikowalczykwilliams12}.
This paper provides a firm foundation for this theory 
and for future explorations.

\nocite{caiguowilliams13,landsbergmortonnorine09,morton2011,lixia08}

\bibliography{counting}
\bibliographystyle{plain}
\end{document}